\documentclass[svgnames]{ceurart}
\usepackage{llncs-dropin}
\usepackage{preamble}
\setboolean{arxiv}{true}
\pratendSetGlobal{normal}
\makeatletter
\def\@oddfoot{\hfil\thepage\hfil}
\def\@evenfoot{\hfil\thepage\hfil}
\makeatother
\newcommand{\tikzscale}[1]{\scalebox{.8}{#1}}
\begin{document}
\copyrightyear{2026}
\copyrightclause{Copyright for this paper by its authors.
  Use permitted under Creative Commons License Attribution 4.0
  International (CC BY 4.0)}

\ifarxiv
\conference{Full version of \cite{GCM-original}, presented at Graph Computational Models 2026}
\title{Structural Morphisms for Nested Conditions --- Full Version}
\else
\conference{Joint Proceedings of the STAF 2026 Workshops: AgileMDE, GCM, ICMM, LLM4SE, TTC. Rennes, France, June 29-July 3, 2026}
\title{Structural Morphism for Nested Conditions}
\fi

\author[1]{Arend Rensink}[email=arend.rensink@utwente.nl]
\author[2]{Andrea Corradini}[email=andrea.corradini@unipi.it]

\address[1]{University of Twente, Netherlands}
\address[2]{University of Pisa, Italy}

\begin{abstract}
Nested conditions are used, among other things, as a graphical way to express first order formulas ruling the applicability of a graph transformation rule to a given match. In this paper, we first introduce several operators on conditions mimicking logical connectives. Next we propose an original notion of structural morphism among nested conditions, and we identify circumstances under which morphisms are consistent with the entailment of the corresponding conditions. Finally we frame the results in a categorical context, proving functoriality and universality properties of the various operations. 
\ifarxiv

\medskip\noindent This is the full version (including proofs) of the paper published as \cite{GCM-original}.
\fi
\end{abstract}


\maketitle
\section{Introduction}
\label{sec:introduction}

Representing formulas of First-Order Logic (FOL) by graphs or more general graphical structures was explored in various areas of Theoretical Computer Science and Logics along the decades. A canonical example is represented by edge-labelled graphs, which can be regarded as an alternative syntax for formulas of a fragment of FOL including just basic (binary) predicates, equality, conjunction and existential quantification: we call this briefly the $\exists$-fragment.

Let $\Lab = \{\gl{a}, \gl{b}, \gl{c}, \ldots\}$ be a set of binary relation symbols, which we shall use also as edge labels. As an example, let $A$ be the graph $\inline{\twoedge{x}{a}{y}{b}{z}}$. We consider this graph as a sound representation of the formula  $\phi_A = \exists x,y,z\st \gl{a}(x,y) \land \gl{b}(y,z)$, in the following sense: a graph $G$ satisfies $\phi_A$ (or is a \emph{model} of $\phi_A$) if and only if there is a graph morphism $h$ from $A$ to $G$. 
Now consider $B = \inline{\oneedge{x}{a}{y}}$, representing formula $\phi_{B} = \exists x,y\st \gl{a}(x,y)$. Graph $B$ has an obvious inclusion morphism into $A$, viz.\ $i\of B \to A$. Therefore every morphism $h\of A \to G$ gives rise to a composed morphism $i;h\of B\to G$,\footnote{Along the paper, we denote by $f;g\of A \to C$ the composition of arrows $f\of A \to B$ and $g \of B \to C$, using diagrammatic order.} implying that every graph that satisfies $\phi_A$ also satisfies $\phi_{B}$, thus $\phi_A$ \emph{entails} $\phi_{B}$ (written $\phi_A \entails \phi_{B}$). It is worth noting that, in this elementary framework, graph morphisms can represent both a satisfaction relation (between formulas and models) and an entailment relation among formulas.

These concepts were exploited for example by Chandra and Merlin in \cite{DBLP:conf/stoc/ChandraM77} in the framework of relational database queries. They show there that every \emph{conjunctive query} (a formula of the $\exists$-fragment, but with relations of any arity) has a natural model, viz.\ a graph, and query inclusion is equivalent to the existence of a graph homomorphism between those natural models. Therefore morphisms are not only sound, but also complete with respect to entailment, and it follows that query inclusion is decidable, even if NP-complete: an interesting logical result obtained with graph theoretical techniques.

In the realm of Graph Transformation Systems  (GTSs)~\cite{eept:fundamentals-agt,DBLP:books/sp/HeckelT20},
 the need of representing formulas by graphs arose in a natural way. Indeed,
in any approach a rule consists of at least two graphs, $L \leadsto R$, and to apply it to a graph $G$, first a morphism $m: L \to G$ has to be found. 
 By the above discussion, we can consider $L$ as a formula of the $\exists$-fragment that has to be satisfied by $G$, as an application condition of the rule. 

It soon turned out that in order to use GTSs for even simple specifications, more expressive application conditions were needed. 
In~\cite{NegativeAC} the authors introduced \emph{Negative Application Conditions (NACs)}, allowing to express (to some extent) negation and disjunction.  A NAC $\cN$ is  a finite set of morphisms from $L$, $\cN = \{n_i: L \to Q_i\}_{i\in[1,k]}$, and a morphism $m: L \to G$ \emph{satisfies} $\cN$ if for all $i\in[1,k]$ there is no morphism $m_i: Q_i \to G$ such that $n_i;m_i = m$. 
It follows that such a NAC represents a formula of the shape $\exists \bar{x}\st \phi_L \wedge \neg (\exists \bar{y}_1\st \phi_{Q_1} \vee \ldots \vee \exists \bar{y}_n\st \phi_{Q_n})$: as in \cite{Rensink-FOL}, we call this the $\exists \neg \exists$-fragment (of FOL).

Note that differently from the $\exists$-fragment, the structures representing the $\exists \neg \exists$-fragment are no longer graphs, but diagrams (``stars'') in \cat{Graph}, the category of graphs; and satisfaction does not require just the existence of a matching morphism from $L$, but also the non-existence of certain other morphisms.
NACs were generalized in~\cite{Rensink-FOL,Habel-FOL} to \emph{Nested (Application) Conditions}, where the structure of a condition is a finite tree of arbitrary depth rooted at $L$, and satisfaction is defined like for NACs, but iterating further at each level of the tree. Interestingly, nested conditions were proved to have the same expressive power of full FOL.

Since nested conditions denote formulas, they are the objects of an obvious category (actually, a preorder) where arrows represent entailment: paper~\cite{bchk:conditional-reactive-systems} includes a comprehensive presentation of this preorder and of the categorical properties of logical operators on conditions.\footnote{Nested conditions are defined in a quite different way in this paper, but the expressive power is the same.} But differently from the case of the $\exists$-fragment, where entailment can be ``explained" by the existence of a graph morphism, we are not aware of similar results for the larger $\exists \neg\exists$-fragment nor for FOL. More explicitly, despite the fact that NACs first and Nested Conditions next were defined as suitable diagrams in a category of graphs (or of similar structures), we are not aware of  definitions of \emph{structural morphisms} among such application conditions, providing evidence for (some cases of) entailment like simple graph morphisms do for the $\exists$-fragment. We addressed this issue for the first time in~\cite{DBLP:conf/birthday/Rensink025} by considering two classes of nested conditions, the \emph{arrow-based} ones (those of~\cite{Rensink-FOL}) and the more refined \emph{span-based} ones, introducing various kinds of morphisms for them. In this paper we stick to the first kind of conditions, but we propose  a more systematic and general approach, which is the main contribution of this paper. In particular, the morphisms we define here at the same time are simpler than and subsume those of \cite{DBLP:conf/birthday/Rensink025} for arrow-based conditions. Their generalization to span-based conditions is ongoing work.  

We start by recalling in \cref{sec:conditions} the main definitions related to nested conditions and their satisfaction from \cite{Rensink-FOL}, and we present several constructions on them aimed at capturing logical operators in a categorical setting. The section is concluded by a completeness result showing that every condition can be built in a canonical way using the proposed operators. 
\Cref{sec:morphisms} presents the main original contribution, i.e.~a very general and weak notion of structural morphisms among conditions, which satisfies the desired properties of being composable and having identities. Such morphisms include collection of arrows between the source and target condition, flipping direction at each nesting level, and forming certain (possibly not commuting) squares with the arrows of the conditions themselves. By imposing suitable conditions on such squares, we identify two subclasses of morphisms, \emph{reflective} and the \emph{preservative} ones, which are shown to reflect and preserve satisfaction, respectively, and thus witness entailment between the formulas associated with the related conditions. We also show that the composition of morphisms of the same kind is again a morphism of that kind, and that identities are morphisms of both kinds. Finally we show that under mild assumptions for each preservative morphism there is a reflective one in the opposite direction.
In \cref{sec:categories} we first frame the definitions of the various kind of morphisms of \cref{sec:morphisms} and their relations in a categorical context, introducing several categories of conditions and identity-on-object functors among them. Next, this allows us to state and prove some functoriality and universality properties of the constructions on conditions of \cref{sec:conditions}. \Cref{sec:conclusion} summarizes the contributions of the paper, discusses some related work and hints at future developments.
 
\section{Nested conditions and satisfaction}
\label{sec:conditions}

We start by recalling some basic concepts and the standard notion of nested condition from~\cite{Rensink-FOL}. We will drop the adjective ``nested'' and simply speak of \emph{conditions}, as we will not consider any other kind of conditions along the paper. 

\noindent
\textbf{Assumption}\ 
For the sake of generality, along the paper formal definitions and results will be phrased in terms of objects and arrows of a generic category $\bC$ that we assume to be a \emph{category of presheaves}, i.e., a category of contravariant functors from a small category \cat{S} to \cat{Set}, thus \cat{C} $= [\op{\cat{S}} \to \cat{Set}]$. Categories of presheaves are known to be toposes~\cite{MacLaneMoerdijk1992}: we will call them \emph{presheaf toposes}.
\medskip

Assuming that \cat{C} is a presheaf topos ensures several properties we need in the constructions of this paper: in particular, that all limits and colimits exist (and can be computed pointwise), and also that epis are stable under pullback~\cite{MacLaneMoerdijk1992}. Furthermore, \cat{C} is   \emph{adhesive}~\cite{ls:adhesive-journal}, enjoying several properties exploited in the algebraic theory of graph rewriting, where the results of this paper have potential interesting applications. Note that requiring \cat{C} to be just adhesive would not suffice: for example, we need arbitrary pushouts, while adhesivity only guarantees pushouts along monos.

\begin{example}[presheaf toposes]\label{ex:presheaf-toposes}

%
  Several categories of graphs and hypergraphs are presheaf toposes, including \emph{graph structures}~\cite{EhrigHKLRWC97}. For example, directed unlabelled graphs are obtained with \cat{S} the free category generated by $\mygraph{
  \node (1) {$\bullet$};
  \node (2) [right=of 1] {$\bullet$};
  \path (1) edge[bend left=20,->] (2)
        (1) edge[bend right=20,->] (2);
}$. 
Furthermore, presheaf toposes are closed under the construction of slice and functor categories (see Sec.~5 of~\cite{AzziCR19} for a recap), thus they also include \emph{triple graphs}~\cite{Anjorin_Leblebici_Schürr_2016}, \emph{typed graphs}~\cite{eept:fundamentals-agt}, and in particular $\cat{Graph}$, the category of directed, edge-labelled multigraphs introduced below, that will be used for examples and intuitions in the rest of the paper.   
\end{example}


\begin{definition}[Category of Graphs]\label{def:graph-and-morphism}
  A \emph{(directed, edge-labelled) graph}~$G = (V_G,E_G,s_G,t_G,\ell_G)$ has sets~$V_G$ of nodes and~$E_G$ of edges, along with source and target functions~$s_G,t_G\of E_G\to V_G$, and an edge-labelling function $\ell_G\of E_G \to \Lab$. 
  A \emph{graph morphism}~$f:G\to H$ is a pair of functions~$f = (f_V\of V_G\to V_H, f_E\of E_G\to E_H)$ that preserve incidence (that is, $s_G; f_V = f_E;s_H$ and $t_G; f_V = f_E ; t_H$) and labels ($f_E;\ell_H = \ell_G$).
  Graphs and graph morphisms determine  category~$\cat{Graph}$.\qed
  
\end{definition}

\medskip\noindent
Conditions are inductively defined as follows:

\begin{definition}[condition]\label{def:condition}
  Let $R$ be an object of $\bC$. $\AC R$ (the set of \emph{conditions} over $R$) and $\AB R$ (the set of \emph{branches} over $R$) are the smallest sets such that
  \begin{itemize}[nosep]
  \item $c\in \AC R$ if $c=(R,p_1\ccdots p_w)$ is a pair with $p_i\in \AB R$ for all $1\leq i\leq w$, where $w \geq 0$;
  \item $p\in \AB R$ if $p=(a,b)$ where $a: R\to P$ is the \emph{branch arrow} and $b\in \AC P$ the \emph{branch condition}.\qed
  \end{itemize}
\end{definition}
\noindent
We call $R$ the \emph{root} of a condition or branch, and $P$ the \emph{pattern} of a branch (which is simultaneously the root of its branch condition). \Cref{fig:condition} provides a visualisation of a condition $c$. We use $c$ to range over conditions and $p,q$ to range over branches. We use $|c|=w$ to denote the width of a condition $c$, $R^c$ to denote its root, and $p^c_i=(a^c_i,b^c_i)$ its $i$-th branch; for a condition $c_j$, the notations are $R_j$ for the root and $p^j_i=(a^j_i,b^j_i)$ for the $i$-th branch. Finally, we use $P^c_i$ ($=R^{b_i}$) for the pattern of branch $p^c_i$.
\begin{figure}[t]
\centering
\subcaptionbox
  {Condition $c=(R,p_1\ccdots p_w)$, with $p_i=(a_i,b_i)$ for $1\leq i\leq w$
   \label{fig:condition}}
  [.45\textwidth]
  {\begin{tikzpicture}[on grid]
  \node (Rc) {$R^c$};
  \tri{Rc}{3.0}{1.8}{3.6}{}
  \node (P1) [below left=1.6 and .7 of Rc] {$P_1$};
  \node [below=of Rc] {$\cdots$};
  \node (Pn) [below right=1.6 and .7 of Rc] {$P_w$};

  \tri{P1}{1.2}{.6}{1.2}{$b_1$}
  \tri{Pn}{1.2}{.6}{1.2}{$b_w$}
  
  \path (P1) edge[<-] node[above left] {$a^c_1$} (Rc.south)
        (Pn) edge[<-] node[above right] {$a^c_w$} (Rc.south);
\end{tikzpicture}}
\quad
\subcaptionbox
  {$g\sat c$, with responsible branch $p_i=(a_i,b_i)$ and witness $h$ such that $g=a_i;h$
   \label{fig:satisfaction}}
  [.5\textwidth]
  {\begin{tikzpicture}[on grid]
  \node (R) {$R$};
  \tri{R}{2.8}{1.3}{2.6}{}
  \node (Pi) [below=1.5 of R] {$P_i$};
  \tri{Pi}{1.2}{.6}{1.2}{$b_i$}
  \node (G) [right=2 of R] {$G$};

  \path (R) edge[->] node[above] {$g$} (G)
        (R) edge[->] node[left,near end] {$a_i$} (Pi)
        (Pi) edge[->,bend right] node[pos=0.2,below right] (h) {$h$} (G)
        (h) edge[draw=none] node[sloped,allow upside down] {$\nsat$} (Pi-label);
\end{tikzpicture}}
\caption{Visualisations for  conditions}
\end{figure}

Note that, as a consequence of the inductive nature of \cref{def:condition}, every condition has a finite \emph{depth} $\depth(c)$, defined as $0$ if $|c|=0$ and $1+\max_{1\leq i\leq |c|} \depth(c_i)$ otherwise. The depth will provide a basis for inductive proofs.

\begin{example}\label{ex:conditions}
\fcite{conditions} shows the graphical representation of three conditions, rooted in the discrete one-node graph \inline{\onenode x}. The graph morphisms are in all cases implied by the graph structure and node names. The leftmost one, $c_1 \in \AC{\inline{\onenode x}}$, is defined according to Def.~\ref{def:condition} as 

\smallskip
$\begin{array}{ll}
c_1 = (\inline{\onenode x}, (f,c_{11})) & c_{11} = (\inline{\oneloopleft{x}{b}}, (g,c_{111})(h,c_{112}))\\
c_{111} = (\inline{\looponeedge{x}{b}{a}{y}}, \epsilon) & c_{112} = (\inline{\looponeedge{x}{b}{c}{y}}, \epsilon)
\end{array}$
\smallskip

\noindent
For the depths,  $\depth(c_{111}) = \depth(c_{112}) = 0$, $\depth(c_{11}) = 1$ and $\depth(c_1) = 2$.

Based on the notion of satisfaction that we are about to introduce, every condition can be seen as a property on morphisms from its root to a graph, expressed in the first-order logic on graphs, see~\cite{Rensink-FOL}.  
Due to space constraints, we do not discuss the precise correspondence between conditions and FOL formulas here, but we present the formulas that correspond to the conditions of \cref{fig:conditions}:
\begin{itemize}[nosep]
\item $c_1$ is equivalent to $\lb(x,x)\wedge \neg \exists y\st(\la(x,y)\vee \lc(x,y))$
\item $c_2$ is equivalent to $\exists y\st \lb(x,y) \wedge \neg \la(y,y)\wedge \neg \exists z\st \lc(y,z)$ 
\item $c_3$ is equivalent to $\la(x,x)\vee (\exists y\st \lb(x,y) \wedge (\forall v,z\st \lc(y,v)\wedge \lc(y,z) \rightarrow v=z))$\qed
\end{itemize}
\end{example}
\begin{figure}[t]
\centering
\begin{tikzpicture}[on grid]
  \node[graph] (10) {\onenode{x}};
  \node[left=0 of 10.west,inner sep=0] {$c_1$};
  \node[graph,below=of 10] (11) {\oneloopleft{x}{b}}; 
  \node[left=0 of 11.west,inner sep=0] {$c_{11}$};
  \node[graph,below left=1 and 1.1 of 11] (111) {\looponeedge{x}{b}{a}{y}};
  \node[above right=0 of 111.north west,inner sep=1] {$c_{111}$};
  \node[graph,below right=1 and 1.1 of 11] (112) {\looponeedge{x}{b}{c}{y}};
  \node[above left=0 of 112.north east,inner sep=1] {$c_{112}$};
  
  \path (10) edge[->] node[left] {$f$} (11)
        (11) edge[->] node[above left,inner sep=1] {$g$} (111)
        (11) edge[->] node[above right,inner sep=1] {$h$}(112);

  \node[graph] (20) [right=4.2 of 10] {\onenode{x}};
  \node[left=0 of 20.west,inner sep=0] {$c_2$};
  \node[graph,below=of 20] (21) {\oneedge{x}{b}{y}}; 
  \node[left=0 of 21.west,inner sep=0] {$c_{21}$};
  \node[graph,below left=1 and 1.1 of 21] (211) {\oneedgeloop{x}{b}{y}{a}};
  \node[above right=0 of 211.north west,inner sep=1] {$c_{211}$};
  \node[graph,below right=1 and 1.1 of 21] (212) {\twoedge{x}{b}{y}{c}{z}};
  \node[above left=0 of 212.north east,inner sep=1] {$c_{212}$};
  
  \path (20) edge[->] (21)
        (21) edge[->] (211)
        (21) edge[->] (212);
  
  \node[graph,right=4 of 20] (30) {\onenode{x}};
  \node[left=0 of 30.west,inner sep=0] {$c_3$};
  \node[graph,below left=of 30] (31) {\oneloop{x}{a}}; 
  \node[left=0 of 31.west,inner sep=0] {$c_{31}$};
  \node[graph,below right=of 30] (32) {\oneedge{x}{b}{y}}; 
  \node[right=0 of 32.east,inner sep=0] {$c_{32}$};
  \node[graph,below=1.2 of 32] (321) {\onetwoedge{x}{b}{y}{c}{v}{c}{z}}; 
  \node[above left=0 of 321.north east,inner sep=1] {$c_{321}$};
  \node[graph,below=1.2 of 321] (3211) {\twoedge{x}{b}{y}{c}{z}}; 
  \node[above left=0 of 3211.north east,inner sep=1] {$c_{3211}$};

  \path (30) edge[->] (31)
        (30) edge[->] (32)
		(32) edge[->] (321)
		(321) edge[->] node[left] {\mapping{v&z}} (3211);
\end{tikzpicture}
\vspace*{-4mm}
\caption{Examples of conditions (see \cref{ex:conditions})}
\label{fig:conditions}
\end{figure}

\subsection{Satisfaction}

A condition expresses a property of arrows from its root to arbitrary objects. This is operationalised through the notion of \emph{satisfaction}.

\begin{definition}[satisfaction]\label{def:satisfaction}%
Let $g\of R\to G$ be an arbitrary arrow.
\begin{itemize}[defsep]
\item Let $c=(R,p_1\ccdots p_w)\in \AC R$. We say that \emph{$g$ satisfies $c$}, denoted $g\sat c$, if there is a branch $p_i$ such that $g\sat p_i$.

\item Let $p=(a\of R\to P,c)\in \AB R$. We say that \emph{$g$ satisfies $p$}, denoted $g\sat p$, if  there is an arrow $h\of P\to G$ such that
\begin{inumerate}
  \item $g=a;h$ and
  \item $h\nsat c$.\qed
\end{inumerate}
\end{itemize}
\end{definition}
\noindent
If $g\sat c$, we also say that $g$ is a \emph{model} for $c$. We call $p_i=(a_i,c_i)$ the \emph{responsible branch} and $h$ such that $g=a_i;h$ and $h\nsat c_i$ the \emph{witness} for $g\sat c$. Pictorially, $g\sat c$ with responsible branch $p_i$ and witness $h$ can be visualised as in \fcite{satisfaction}.

\begin{definition}[entailment, equivalence]\label{def:entailment}
For conditions $c_1,c_2\in \AC R$ over the same root $R$ we  define \emph{semantic entailment} $c_1\entails c_2$ and \emph{semantic equivalence} $c_1\equiv c_2$:
\begin{align*}
c_1 \entails c_2 & \text{ if for all arrows $g\of R\to G$, } g\sat c_1 \text{ implies } g\sat c_2 \\
c_1 \equiv c_2 & \text{ if for all arrows $g\of R\to G$, } g\sat c_1 \text{ if and only if } g\sat c_2 \enspace. 
\end{align*}
\raisedqed[12]
\end{definition}

\begin{example}\label{ex:satisfaction}
Let us consider some models for the conditions in \fcite{conditions}.
\begin{itemize}[nosep]
\item Let $G_1=\myinlinegraph{
\node (1) {$\bullet$};
\node (2) [right=of 1] {$\bullet$};
\path (1) edge[bend left=20,->] node[near start,above] {\lb} (2)
      (2) edge[bend left=20,->] node[near start,below] {\lb} (1)
	  (1) edge[loop left,->] node[left] {\lc} (1);
}$
and let $g$ be the morphism from the one-node discrete graph \inline{\onenode x} to the left hand node of $G_1$. It is clear that $g\nsat c_1$ because there is no witness for $c_{11}$ ($G_1$ does not have the required $\lb$-loop). Instead, we have $g\sat c_2$: the witness $h$ (for $c_{21}$) maps $y$ to the right hand node of $G_1$; and $h$ does not satisfy the subconditions of $c_{21}$ because it cannot be extended with either the $\la$-loop specified by $c_{211}$ or the outgoing $\lc$-edge specified by $c_{212}$. Similarly, $g\sat c_3$.

\item Let $G_2=\myinlinegraph{
\node (1) {$\bullet$};
\node (2) [right=of 1] {$\bullet$};
\node (3) [right=of 2] {$\bullet$};
\path (1) edge[bend left=20,->] node[near start,above] {\lb} (2)
      (2) edge[bend left=20,->] node[near start,below] {\lb} (1)
	  (1) edge[loop left,->] node[left] {\la} (1)
      (2) edge[->] node[above] {\lc} (3);
	  }$
and let $g,g',g''$ be the morphisms from \inline{\onenode x} to $G_2$ mapping \inline{\onenode x} to the left, mid and right node of $G_2$, respectively. None of these models satisfy $c_1$, for the same reason as above. Moreover, none satisfy $c_2$: though $g$ and $g'$ have witnesses for $c_{21}$, these are ruled out by either $c_{212}$ (in the case of $g$) or $c_{211}$ (in the case of $g'$); $g''$ does not even have a witness for $c_{21}$. Instead, both $g\sat c_3$ (in fact there are two distinct witnesses, one for $c_{31}$ and one for $c_{32}$) and $g'\sat c_3$ (due to $c_{32}$); but again $g''\nsat c_3$.

\item If $G_3=\inline{\oneloopleft \bullet b}$, then the only morphism $g$ from \inline{\onenode x} to $G_3$ has $g\sat c_1$, $g\sat c_2$ and $g\sat c_3$. 
\end{itemize}
In general, it can be checked that every model of $c_1$ is also a model of $c_2$ and every model of $c_2$ is a model of $c_3$, thus $c_1\entails c_2$ and $c_2 \entails c_3$. On the other hand, $c_2\nsat c_1$ as shown by $g:\inline{\onenode x}\to G_1$, and $c_3 \nsat c_2$ as shown by $g:\inline{\onenode x}\to G_2$.\qed
\end{example}

\subsection{Constructions over conditions}

We recall and introduce some special conditions and constructions that are used in the developments of this paper, together with their properties with respect to satisfaction.

\paragraph{Falsehood.}

For any graph $R$, there is an obvious ``trivial'' condition rooted in $R$, namely the one without any branches.
\begin{equation}\label{eq:bottom}
\bot_R \isdef (R,\epsilon) \enspace.
\end{equation}
The following result is immediate; it confirms that $\bot_R$ represents \emph{false} for models rooted in $R$.
\begin{proposition}[$\bot$ captures falsehood]\label{prop:bottom}
Let $R$ be a graph. Then $g\nsat \bot_R$ for any $g\of R\to G$.
\end{proposition}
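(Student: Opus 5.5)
The claim follows by unfolding \cref{def:satisfaction} for the specific condition $\bot_R=(R,\epsilon)$ given in \eqref{eq:bottom}. No induction on depth and no earlier result is needed, since $\bot_R$ has depth $0$ and the argument never looks beneath the root.

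Fix an arbitrary arrow $g\of R\to G$. By the first clause of \cref{def:satisfaction}, $g\sat\bot_R$ would require a branch $p_i$ of $\bot_R$ with $g\sat p_i$. The branch sequence of $\bot_R$ is the empty sequence $\epsilon$, so $|\bot_R|=0$ and the index range $1\leq i\leq 0$ is empty. The existential statement ``there is a branch $p_i$ such that $g\sat p_i$'' is therefore false. Hence $g\nsat\bot_R$, and since $g$ was arbitrary, this holds for every $g\of R\to G$.

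There is no real obstacle. The only point needing care is that satisfaction of a condition is a disjunction over its branches, which is vacuously false when there are none, and this is exactly the reading \cref{def:satisfaction} gives. The proof uses no property of the presheaf topos $\bC$ or of $R$ and $G$.
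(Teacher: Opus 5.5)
Your proof is correct and matches the paper, which simply declares the result immediate: $\bot_R=(R,\epsilon)$ has no branches, so the existential over branches in the definition of satisfaction is vacuously false for every $g$.
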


\paragraph{Inversion.}

Given a condition over $R$, a straightforward construction is to ``push it down a level'' by making it a subcondition of the identity arrow $\id_R$.
\begin{definition}[inverse]\label{def:inverse}
Let $c\in \AC R$. The \emph{inverse} of $c$ is given by $\inv c\isdef (R,(\id_R,c))\in \AC R$.\qed
\end{definition}
\noindent
The following proposition states that inverse behaves like negation: a model satisfies a condition if and only if it does not satisfy its inverse. The proof is straightforward and omitted here.
\begin{proposition}[inverse captures negation]\label{prop:inverse}
Let $c\in \AC R$. For any $g\of R\to G$, $g\sat c$ iff $g\nsat \inv c$.
\end{proposition}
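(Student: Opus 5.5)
The plan is to unfold \cref{def:satisfaction} twice, once for the condition $\inv c=(R,(\id_R,c))$ and once for its single branch $(\id_R,c)$, and then exploit the fact that the branch arrow is an identity.

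Fix $g\of R\to G$. Since $\inv c$ has exactly one branch, namely $(\id_R,c)$, the condition clause of \cref{def:satisfaction} gives that $g\sat\inv c$ holds if and only if $g\sat(\id_R,c)$. By the branch clause, $g\sat(\id_R,c)$ holds if and only if there is an arrow $h\of R\to G$ with $g=\id_R;h$ and $h\nsat c$. The equation $g=\id_R;h$ means exactly $h=g$, because $\id_R;h=h$. So a witness exists iff $g$ itself is one, that is, iff $g\nsat c$. Hence $g\sat\inv c$ holds if and only if $g\nsat c$. Negating both sides yields the statement: $g\sat c$ iff $g\nsat\inv c$.

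The only step needing any care is the existential over witnesses. The identity branch arrow forces the witness to be unique, equal to $g$, so the existential collapses; no other argument is needed. No induction on depth is required, because $c$ is treated as a black box and its own satisfaction is never unfolded.
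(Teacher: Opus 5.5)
Your argument is correct: the single branch $(\id_R,c)$ of $\inv c$ forces any witness to equal $g$, so $g\sat\inv c$ iff $g\nsat c$, and negating both sides gives the claim. The paper omits this proof as straightforward, and your direct unfolding of the satisfaction definition is exactly the intended argument.
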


\paragraph{Shifting.}

We introduce two basic operations over conditions that allow one to change (\emph{shift}) the condition's root by exploiting arrows from or to the root. 
Given a condition $c \in \AC{R}$ and an incoming arrow $u\of S \to R$, we can \emph{upshift} $c$ along $u$ obtaining a condition $\ushift{c}{u} \in \AC{S}$. Dually, given an outgoing arrow $v\of R \to Q$, we can \emph{downshift} $c$ along $v$ by exploiting the existence of pushouts in $\cat{C}$, obtaining a condition $\dshift{c}{v} \in \AC{Q}$ (see \cref{fig:shift}).\footnote{In the literature (e.g., \cite{p:correct-GTSs-thesis,bchk:conditional-reactive-systems}), downshift is more commonly just called \emph{shift}.}

For the components of the pushout of two arrows $v\of R \to Q$ and $a \of R \to P$ we will use the notation illustrated in the following diagram, assuming that $( P\tdarr Q, v\tdarr a, a\tdarr v)$ is a concretely chosen pushout of $v$ and $a$ (which is defined up to isomorphism).

\begin{equation}\label{eq:pushout}
\vspace*{-2mm}\tikzscale{\begin{tikzpicture}[auto,node distance = 1.5cm,baseline=(a)]
      \node (R) {$R$};
      \node (Q) [right of=R] {$Q$};
      \node (P) [below of=R] {$P$};
    \node (PO) [below of=Q] {$P\tdarr Q$};

    \path
      (R) edge [->] node[auto] {$\scriptstyle{v}$}  (Q)
      edge [->] node (a) [auto,swap] {$\scriptstyle{a}$}  (P)
    (Q) edge [->] node[auto] {$\scriptstyle{a\tdarr v}$}  (PO)
    (P) edge [->] node[auto] {$\scriptstyle{v\tdarr a}$}  (PO); 
\pushoutmark{PO}{R}{0.3}{3mm}
\end{tikzpicture}}
\end{equation} 
  
\begin{figure}[t]
\centering
{  \begin{tikzpicture}[node distance = 2cm]
      \node (R) {$R$};
      \tri{R}{2.7}{1.2}{2.4}{}
      \node (S) [left of=R, xshift=-1.5cm] {$S$};
      \tri{S}{2.7}{1.2}{2.4}{}
      \node (Q) [right of=R, xshift=1.5cm] {$Q$};
      \tri{Q}{2.7}{1.2}{2.4}{}
      \node (Pi) [below of=R,yshift=4mm] {$P_i$};
      \tri{Pi}{1}{0.4}{0.8}{$\scriptstyle{b_i}$}
      \node (Pib) [below of=S,yshift=4mm] {$P_i$};
      \tri{Pib}{1}{0.4}{0.8}{$\scriptstyle{b_i}$}
     \node (Pis) [below of=Q,yshift=4mm] {$P_i \tdarr Q$};
     \tri{Pis}{1}{0.8}{1.6}{$\scriptstyle{\dshift{b_i}{(v\tdarr a_i)}}$}
     \node (lab) [above of=R-left, xshift=1mm, yshift=-5mm] {$c$};
     \node (labb) [above of=S-left, xshift=1mm, yshift=-5mm] {$\ushift{c}{u}$};
     \node (labs) [above of=Q-right, xshift=-1mm, yshift=-5mm] {$\dshift{c}{v}$};
     \pushoutmark{Pis}{R}{0.3}{3mm}
     
\path 
(S) edge [->] node[auto] {$\scriptstyle{u}$}  (R)
(R) edge [->] node[auto] {$\scriptstyle{v}$}  (Q)
     (S) edge [->] node[pos=.65, fill=black!5, inner sep=1pt] {$\scriptstyle{u;a_i}$}  (Pib)
    (R) edge [->] node[pos=.65, fill=black!5, inner sep=1pt] {$\scriptstyle{a_i}$}  (Pi)    
    (Q) edge [->] node[pos=.65, fill=black!5, inner sep=1pt] {$\scriptstyle{a_i\tdarr v}$}  (Pis)
    (Pi) edge [->] node[auto] {$\scriptstyle{v\tdarr a_i}$} (Pis)
    (S) edge [draw=none] node [pos=.55,xshift=2mm] {$\scriptstyle{\cdots}$} (S-left)
    edge [draw=none] node [pos=.55,xshift=-2mm] {$\scriptstyle{\cdots}$} (S-right)
    (R) edge [draw=none] node [pos=.6,xshift=2mm] {$\scriptstyle{\cdots}$} (R-left)
    edge [draw=none] node [pos=.6,xshift=-2mm] {$\scriptstyle{\cdots}$} (R-right)
    (Q) edge [draw=none] node [pos=.65,xshift=2mm] {$\scriptstyle{\cdots}$} (Q-left)
    edge [draw=none] node [pos=.65,xshift=-2mm] {$\scriptstyle{\cdots}$} (Q-right)
    ;
    
    \end{tikzpicture}}
\caption{Upshift (left) and downshift (right) of $c$ along $u$ resp.\ $v$.}
\label{fig:shift}
\end{figure}

\begin{definition}[upshift]
  \label{def:upshift}
Let $u\of S \to R$ be an arrow in $\cat{C}$. 
\begin{itemize}[nosep]
\item For a condition $c=(R,p_1\ccdots p_w)\in \AC R$, the \emph{upshift of $c$ along $u$} is the condition $\ushift{c}{u} \in \AC{S}$ defined by $\ushift{c}{u} \isdef (S,\ushift{p_1}{u} \cdots \ushift{p_w}{u})$;
\item For a branch $p=(a,b)\in \AB R$, the \emph{upshift of $p$ along $u$} is the branch $\ushift p u\in \AB S $ defined by $\ushift p u\isdef (u;a,b)$.\qed
\end{itemize}
\end{definition}
\noindent
Note that upshifting $c$ along an incoming arrow $u$ is \emph{not} the same as constructing a branch $p=(u,c)$ where $c$ becomes the subcondition of $p$. The semantic difference will become clear below (\cref{prop:shiftSat}).
 
\begin{definition}[downshift]
  \label{def:downshift}
Let $v\of R \to Q$ be an arrow in $\cat C$.
\begin{itemize}[nosep]
\item For a condition $c=(R,p_1\ccdots p_w)\in \AC R$, the \emph{downshift of $c$ along $v$} is the condition $\dshift{c}{v} \in \AC{Q}$ defined by  $\dshift{c}{v} \isdef (Q, \dshift{p_1}{v}\cdots \dshift{p_w}{v})$.

\item For a branch $p=(a,b)\in \AB R$, the  \emph{downshift of $p$ along $v$}  is the branch $\dshift{p}{v} \in \AB{Q}$ defined by $\dshift p v\isdef (a \tdarr v, \dshift{b}{(v\tdarr {a})})$.\qed
\end{itemize}
\end{definition}
\noindent
Note that \emph{upshifting} a condition along an arrow only affects the top-level branches, while the branch conditions are preserved. Instead, \emph{downshifting} a condition along an arrow also affects all branch conditions, which are recursively shifted along the corresponding pushout arrows. 

\begin{example}\label{ex:shift}
Consider the following condition $c$ (which equals $c_2$ of \cref{fig:conditions}), with the arrows $u\of Q\to R$ and $v\of R\to S$ as in the following diagram:
\[\vspace*{-4mm}\tikzscale{\begin{tikzpicture}[on grid,auto]
  \node[graph] (20) {\onenode{x}};
  \node[below left=0 of 20.south west,inner sep=-1] {$R$};
  \node[graph,below=1.2 of 20] (21) {\oneedge{x}{b}{y}}; 
  \node[graph,below left=1.2 and 1.2 of 21] (211) {\oneedgeloop{x}{b}{y}{a}};
  \node[graph,below right=1.2 and 1.2 of 21] (212) {\twoedge{x}{b}{y}{c}{z}};
  
  \path (20) edge[->] node {$a$} (21)
        (21) edge[->] node[above left,inner sep=0] {$a_1$} (211)
        (21) edge[->] node[above right,inner sep=0] {$a_2$} (212);

  \node[graph] (Q) [left=1.5 of 20] {$\emptygraph$};
  \node[left=0 of Q.west,inner sep=1] {$Q$};
  \node[graph] (S) [right=1.8 of 20] {$\oneedge{x'}cx$};
  \node[right=0 of S.east,inner sep=1] {$S$};

  \path (Q) edge[->] node[above] {$u$} (20)
        (20) edge[->] node[above] {$v$} (S);
\end{tikzpicture}}\]
The upward shifted condition $\ushift c u$ and downward shifted condition $\dshift c v$ are then given by:
\[\vspace*{-2mm}\tikzscale{\begin{tikzpicture}[on grid,auto]
  \node[graph] (20) {\emptygraph};
  \node[right=0 of 20.east,inner sep=1] {$Q$};
  \node[left=1 of 20] {$\ushift c u\colon$};
  \node[graph,below=1.2 of 20] (21) {\oneedge{x}{b}{y}}; 
  \node[graph,below left=1.2 and 1.2 of 21] (211) {\oneedgeloop{x}{b}{y}{a}};
  \node[graph,below right=1.2 and 1.2 of 21] (212) {\twoedge{x}{b}{y}{c}{z}};
  
  \path (20) edge[->] node {$u;a$} (21)
        (21) edge[->] node[above left,inner sep=0] {$a_1$} (211)
        (21) edge[->] node[above right,inner sep=0] {$a_2$} (212);
\end{tikzpicture}}
\qquad\qquad
\tikzscale{\begin{tikzpicture}[on grid,auto]
  \node[graph] (20) {\oneedge{x'}cx};
  \node[right=0 of 20.east,inner sep=1] {$S$};
  \node[left=1.5 of 20] {$\dshift c v\colon$};
  \node[graph,below=1.2 of 20] (21) {\twoedge{x'}c{x}{b}{y}}; 
  \node[graph,below left=1.2 and 1.6 of 21] (211) {\twoedgeloop{x'}c{x}{b}{y}{a}};
  \node[graph,below right=1.2 and 1.6 of 21] (212) {\threeedge{x'}c{x}{b}{y}{c}{z}};
  
  \path (20) edge[->] node {$\dshift a v$} (21)
        (21) edge[->] node[left=.2] {$\dshift{a_1}{(\dshift va)}$} (211)
        (21) edge[->] node[right=.2] {$\dshift{a_2}{(\dshift va)}$} (212);
\end{tikzpicture}}
\]
\raisedqed
\end{example}
\noindent
Downshifting a condition along an arrow preserves the models prefixed with such arrow (again see \cite{p:correct-GTSs-thesis,bchk:conditional-reactive-systems}), whereas upshifting along an arrow preserves them, and also reflects them if the arrow is epi.

\begin{propositionE}[satisfaction for upshift and downshift]\label{prop:shiftSat}
Let $c \in \AC R$.
\begin{enumerate}[nosep]
  \item
  Let $v\of R \to S$. Then $v;g \sat c$ iff $g \sat \dshift{c}{v}$.
  \item 
  Let $u\of Q \to R$.  Then $g \entails \ushift{c}{u}$ iff there exists an arrow $g'$ such that $g = u;g'$ and $g' \entails c$.
  \item   
  If $u;g \sat \ushift{c}{u}$ and $u$ is epi then $g \sat c$.

\end{enumerate}
\end{propositionE}

\begin{proofE}~
\begin{enumerate}[defsep]
\item We proceed by induction on the depth of $c$, noting that downshift preserves the depth. If $\depth(c) = 0$ then $c = (R, \epsilon)$ and $\dshift{c}{v} = (Q,\epsilon)$; thus for every $g \of Q \to G$ we have $u;g \not \sat c$ and $g \not \sat \dshift{c}{u}$. Now let $\depth(c) = n > 0$ and assume that the statement holds for all conditions with depth smaller than $n$.

\begin{description}
\item[{$\Rightarrow$}] Let $g\of Q \to G$ and assume that $v;g \sat c$, meaning that there is a branch $p_i = (a_i\of R \to P_i, b_i)$ of $c$ with $i \in [1,w]$ and an $h: P_i \to G$ such that \tagone $a_i ; h = v;g$ and \tagtwo $h \not \sat b_i$. To show that $g \sat \dshift{c}{v}$, we can consider the branch $(a_i\tdarr v\of Q \to  P_i \tdarr Q, \dshift{b_i}{(v\tdarr a_i)})$ of $\dshift{c}{v}$, and as witness the arrow $h'\of P_i \tdarr Q \to G$ induced by the universal property of the pushout because of \tagone. In fact, we have $(a_i\tdarr v);h' = g$ by the pushout properties, while $h' \not \sat \dshift{b_i}{(v\tdarr a_i)}$ holds by contradiction. Indeed, by the induction hypothesis (since $\depth(b_i) < n$) we have $h' \sat \dshift{b_i}{(v\tdarr a_i)} \iff (v\tdarr a_i) ; h' \sat b_i$ which contradicts \tagtwo because $(v\tdarr a_i) ; h' = h$ by the pushout properties.
 
\item[{$\Leftarrow$}] Assume that $g \sat \dshift{c}{v}$. Thus there is a branch  
$(a_i\tdarr v\of Q \to  P_i \tdarr Q, \dshift{b_i}{(v\tdarr a_i)})$ of $\dshift{c}{v}$ and an $h'\of P_i \tdarr Q \to G$ such that $(a_i\tdarr v);h' = g$ and \tagthree $h' \not \sat \dshift{b_i}{(v\tdarr a_i)}$. We show that $v;g \sat c$ with responsible branch $(a_i\of R \to P_i,b_i)$ and witness \tagfour $h \isdef (v\tdarr a_i) ; h'$. In fact $a_i; h = a_i; v\tdarr a_i ; h' = v; (a_i\tdarr v); h' = v;g$. Moreover, by the induction hypothesis we can assume that $(v \tdarr a_i); h' \sat b_i \iff h' \sat \dshift{b_i}{(v \tdarr a_i)}$, therefore by \tagthree and \tagfour we conclude that $h \not \sat b_i$.     
\end{description}

\item
\begin{description}
\item[{$\Leftarrow$}] 
It is obvious that if $(a_i, b_i)$ is a responsible branch for $g' \sat c$ with witness $h\of P_i \to G$, then $(u;a_i,b_i)$ is a responsible branch for $g = u;g' \sat  \ushift{c}{u}$, with exactly the same witness. In fact $a_i; h = g'$ (and $h \nsat b_i$) implies $u;a_i ; h = u;g$ (and $h \nsat b_i$). \tagone If $u$ is epi also the converse is true, proving Point 3.

\item[$\Rightarrow$] Assume that $g \sat \ushift{c}{u}$, meaning that there is a branch $(u;a_i,b_i)$ of $\ushift{c}{u}$ with $i \in [1,w]$ and an $h\of P_i \to G$ such that $u;a_i ; h = g$. Let $g' \isdef a_i;h$: obviously $g' \sat c$ with responsible branch $(a_i, b_i)$ and witness $h$, since $g' = a_i ; h$, and $h \nsat b_i$.
\end{description}
\item See \tagone in the previous point. 
\end{enumerate}
\end{proofE}
\noindent
The notation chosen for the upshift operation is motivated by its symmetry with downshift. Upshift has a flavour of existential quantification, and indeed an operation denoted $\exists u \st c$ is introduced in~\cite{bchk:conditional-reactive-systems} having the same semantics, in the sense that both satisfy the second point of \cref{prop:shiftSat}.  
However, the notion of nested condition of~\cite{bchk:conditional-reactive-systems} and the concrete definition of the existential operation differ from ours: a precise comparison is deferred to future work. 

\paragraph{Join and meet.}

We define the \emph{join} and \emph{meet} of two conditions over the same root, and show that these capture the disjunction, respectively the conjunction in terms of satisfaction. The first of these, join, merely takes the union of the branches of the two conditions.

\begin{definition}[join]\label{def:join}
Let $c_i=\tupof{R,p^i_1\ccdots p^i_{w_i}} \in \AC R$ for $i=1,2$. The \emph{join of $c_1$ and $c_2$} is defined by $c_1\sqcup c_2\isdef (R,p^1_1\ccdots p^1_{w_1}\, p^2_1\ccdots p^2_{w_2})$.\qed
\end{definition}
\noindent
It is straightforward to show that the join captures disjunction.
\begin{proposition}[join captures disjunction]\label{prop:join}
Given two conditions $c_i \in \AC R$ for $i=1,2$ and an arrow $g\of R\to G$, $g\sat c_1\sqcup c_2$ if and only if either $g\sat c_1$ or $g\sat c_2$.
\end{proposition}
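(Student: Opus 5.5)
The proof is a direct unfolding of \cref{def:satisfaction}, using the fact that the branches of $c_1\sqcup c_2$ are exactly the concatenation of the branches of $c_1$ and of $c_2$ (\cref{def:join}). No induction on depth is needed. Satisfaction of a condition only asks for the existence of some satisfied branch. Satisfaction of a single branch $p=(a,b)$ depends only on $g$, $a$ and $b$, not on which condition the branch belongs to.

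For the ``if'' direction, assume without loss of generality that $g\sat c_1$. Then there is a responsible branch $p^1_i$ of $c_1$ with a witness $h$, that is, $g=a^1_i;h$ and $h\nsat b^1_i$. Since $p^1_i$ also occurs among the branches of $c_1\sqcup c_2$, the same branch and the same witness show $g\sat c_1\sqcup c_2$. The case $g\sat c_2$ is symmetric, using the branches $p^2_j$, which appear after the first $w_1$ positions.

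For the ``only if'' direction, assume $g\sat c_1\sqcup c_2$ with responsible branch $p$ at some position $k$ and witness $h$. By \cref{def:join}, either $k\le w_1$, in which case $p=p^1_k$, or $k>w_1$, in which case $p=p^2_{k-w_1}$. In the first case $p$ with witness $h$ gives $g\sat c_1$; in the second it gives $g\sat c_2$.

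There is no real obstacle here. The only point needing care is to state explicitly that branch satisfaction $g\sat p$ is intrinsic to the branch. This is immediate from the second clause of \cref{def:satisfaction}, and it is what allows responsible branches and witnesses to be transferred between $c_i$ and $c_1\sqcup c_2$ unchanged.
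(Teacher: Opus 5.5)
Your proof is correct and is the direct unfolding of \cref{def:satisfaction} and \cref{def:join} that the paper intends; the paper omits this proof, calling it straightforward. Your remark that branch satisfaction does not depend on the enclosing condition is the only point that needs stating, and you state it.
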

\noindent
We subsequently define the \emph{meet} of two conditions over the same root. The construction relies as an intermediate step on the meet of two branches.
\begin{definition}[meet]\label{def:meet}~
\begin{itemize}[nosep]
\item Let $p_i=(a_i,b_i)\in \AB R$ for $i=1,2$. The \emph{meet of $p_1$ and $p_2$} is a branch $p_1\sqcap p_2\in \AB R$, defined by $p_1\sqcap p_2 \isdef (a_1;(a_2\tdarr a_1),\dshift{b_1}{(a_2\tdarr a_1)}\sqcup \dshift{b_2}{(a_1\tdarr a_2)})$.

\item Let $c_i=(R_i,p^i_1\ccdots p^i_{w_i}) \in \AC R$ for $i=1,2$. The \emph{meet of $c_1$ and $c_2$} is a condition $c_1\sqcap c_2\in \AC R$, defined by $c_1\sqcap c_2\isdef \tupof{R,p_{1,1}\ccdots p_{w_1,1}\ccdots p_{1,w_2}\ccdots p_{w_1,w_2}}$, where $p_{m,n}=p^1_m\sqcap p^2_n$ for all $m\in [1,w_1], n\in [1,w_2]$.\qed
\end{itemize}
\end{definition}
\noindent
Note that, although the definition of the meet for branches might seem asymmetric due to the use of the arrow $a_1;(a_2\tdarr a_1)$, this is actually not the case as $a_1;(a_2\tdarr a_1)=a_2;(a_1\tdarr a_2)$ (see \eqref{eq:pushout}).

\begin{example}\label{ex:meet}
Since the meet is defined in terms of downshift, one simple case is already given by a reinterpretation of the downshift example in \cref{ex:shift}: the result of that downshift can be seen as the meet of the condition $c$ and the condition $(R,(v,(S,\epsilon)))$.

For a slightly more elaborate example, consider the following two conditions $c_1,c_2\in\AC{\inline{\twonode x y}}$ (with $\FOL$ interpretations given below the condition):
\[\tikzscale{\begin{tikzpicture}[on grid,auto]
  \node[graph] (20) {\twonode x y};
  \node [left=1.5 of 20] {$c_1\colon$};
  \node[graph,below=1.2 of 20] (21) {\oneedge{x}{b}{y}}; 
  \node[graph,below left=1.2 and 1.2 of 21] (211) {\oneedgeloop{x}{b}{y}{a}};
  \node[graph,below right=1.2 and 1.2 of 21] (212) {\twoedge{x}{b}{y}{c}{z}};
  
  \path (20) edge[->] node {$a^1_1$} (21)
        (21) edge[->] node[above left,inner sep=0] {$a^1_2$} (211)
        (21) edge[->] node[above right,inner sep=0] {$a^1_3$} (212);
        
  \node [below=3 of 20]
  {$\lb(x,y)\wedge \neg\la(y,y)\wedge \neg\exists z\st \lc(y,z)$};
\end{tikzpicture}}
\qquad
\tikzscale{\begin{tikzpicture}[on grid,auto]
  \node[graph] (20) {\twonode x y};
  \node [left=1.5 of 20] {$c_2\colon$};
  \node[graph,below left=1.2 and 1.2 of 20] (21) {\onenode x}; 
  \node[graph,below right=1.2 and 1.2 of 20] (22) {\onebackedge x c y}; 
  \node[graph,below=1.2 and 1.2 of 21] (211) {\oneloop{x}{a}};
  
  \path (20) edge[->] node[above left,inner sep=0] {$a^2_1$} (21)
             edge[->] node[above right,inner sep=0] {$a^2_2$} (22)
        (21) edge[->] node {$a^2_3$} (211);
        
  \node [below=3 of 20]
  {$(x=y\wedge \neg\la(x,x)) \vee \lc(y,x)$};
\end{tikzpicture}}\vspace*{-1mm}
\]
In this case, the meet can be depicted as
\[\tikzscale{\begin{tikzpicture}[on grid]
  \node[graph] (20) {\twonode x y};
  \node [left=1.5 of 20] {$c_1\meet c_2\colon$};
  \node[graph,below left=1.2 and 2.5 of 20] (21) {\oneloopleft x b}; 
  \node[graph,below left=1.2 and 1.8 of 21] (211) {\twoloop x b a};
  \node[graph,below=1.2 of 21] (212) {\looponeedge x b c z};
  \node[graph,below right=1.2 and 1.8 of 21] (213) {\twoloop x b a};
  \node[graph,below right=1.2 and 2.5 of 20] (22) {\backforthedge{x}{b}{y}{c}};
  \node[graph,below left=1.4 and 1.2 of 22] (221) {\backforthedgeloop{x}{b}{y}{c} a};
  \node[graph,below right=1.4 and 1.2 of 22] (222) {\backforthedgeedge{x}{b}{y}{c} c z};
  
  \path (20) edge[->] node[above left,inner sep=0,near end] {$a^1_1;(a^2_1\tdarr a^1_1)$} (21)
             edge[->] node[above right,inner sep=0] {$a^1_1;(a^2_2\tdarr a^1_1)$} (22)
        (21) edge[->] node[above left,inner sep=0] {$a^1_2\tdarr(a^2_1\tdarr a^1_1)$} (211)
             edge[->] node {$a^1_3\tdarr\cdots\ \ $} (212)
             edge[->] node[above right,inner sep=0] {$a^2_3\tdarr(a^1_1\tdarr a^2_1)$} (213)
        (22) edge[->] node[left,inner sep=0] {$a^1_2\tdarr\cdots\ $} (221)
             edge[->] node[right] {$\ a^1_3\tdarr\cdots$} (222);
        
  \node [below=3.7 of 20]
  {$\begin{array}{c}
    (\lb(x,x) \wedge x=y\wedge \neg \la(x,x)\wedge \neg\exists z\st \lc(x,z) \wedge \neg \la(x,x)) \\
    {}\vee (\lb(x,y)\wedge \lc(y,x) \wedge \neg\la(y,y) \wedge \neg\exists z\st \lc(y,z))
    \end{array}$};
\end{tikzpicture}}\]\raisedqed
\end{example}
\noindent
From the example, the intuition can be gathered that the construction of the meet is on some level related with the distributivity of logical conjunction over disjunction; and in fact, we can show that $(\bigsqcup_{i\in I}c_i)\sqcap (\bigsqcup_{j\in J}c'_j)\cong \bigsqcup_{i\in I,j\in J} (c_i\sqcap c'_j)$, where $\cong$ stands for isomorphism between conditions.

We show that the meet indeed captures conjunction, as announced. 
\begin{propositionE}[meet captures conjunction]\label{prop:meet} Let $g\of R\to G$ be an arbitrary arrow.
\begin{enumerate}[nosep]
\item If $p_i\in \AB R$ for $i=1,2$, then $g\sat p_1\sqcap p_2$ if and only if $g\sat p_1$ and $g\sat p_2$.

\item If $c_i\in \AC R$ for $i=1,2$, then $g\sat c_1\sqcap c_2$ if and only if $g\sat c_1$ and $g\sat c_2$.
\end{enumerate}
\end{propositionE}
\begin{proofE}~
\begin{enumerate}[defsep]
\item Assume $p_i=(a_i\of R\to (P_i,b_i))$ for $i=1,2$, let $P=P_1\tdarr P_2$, and let $b_1'=\dshift{b_1}{(a_2\tdarr a_1)}$ and $b_2'=\dshift{b_2}{(a_1\tdarr a_2)}$. Hence $b_1',b_2'\in \AC P$ and $p_1\sqcap p_2=(a_1;(a_2\tdarr a_1),b_1'\sqcup b_2')$.

\begin{description}
\item[If.] Assume $g\sat p_i$ for $i=1,2$; then there are $h_i\of P_i\to G$ such that $g=a_i;h_i$ and $h_i\nsat b_i$.
 
Because of the pushout construction, there is a unique $h\of P\to G$ such that $(a_2\tdarr a_1);h=h_1$ and $(a_1\tdarr a_2);h=h_2$; hence $a_1;(a_2\tdarr a_1);h=a_2;(a_1\tdarr a_2);h=g$. If $h\sat b_1'\sqcup b_2'$, due to \cref{prop:join} we have $h\sat b_1'$ or $h\sat b_2'$. Let us assume the former; the other case is symmetrical. \Cref{prop:shiftSat} then implies $(a_2\tdarr a_1);h\sat b_1$, which contradicts $h_1\nsat b_1$. We may conclude $g\sat p_1\sqcap p_2$.

\item[Only if.] Assume $g\sat p_1\sqcap p_2$; then there is a $h\of P\to G$ such that $g=a_1;(a_2\tdarr a_1);h$ and $h\nsat b_1'\sqcup b_2'$, meaning (due to \cref{prop:join}) $h\nsat b_1'$ and $h\nsat b_2'$. Let $h_1=(a_2\tdarr a_1);h$; then clearly $g=a_1;h_1$, and Prop.~\ref{prop:shiftSat} implies $h_1\nsat b_1$. It follows that $g\sat p_1$. A symmetrical argument shows that also $g\sat p_2$.
\end{description}

\item Assume $c_i=\tupof{R,p^i_1\ccdots p^i_{w_i}}$, and let $p_{j,k}=p^1_j\sqcap p^2_k$ for $j\in[1,w_1],k\in [1,w_2]$.

\smallskip
$g\sat c_i$ for $i=1,2$ if and only if there are $j\in[1,w_1], k\in[1,w_2]$ such that $g\sat p^1_j$ and $g\sat p^2_k$. According to Clause~1 above, this is the case if and only if there are $j\in[1,w_1], k\in[1,w_2]$ such that $g\sat p_{j,k}$, which in turn is equivalent to $g\sat c_1\sqcap c_2$.
\end{enumerate}
\end{proofE}
\noindent
One may wonder how useful the meet is, given that there is obviously another way to capture conjunction, due to De Morgan duality, through a combination of the (technically much simpler) inversion and join constructions. Indeed, \cref{prop:inverse,prop:join} together guarantee that, for any $c_1,c_2\in\AC R$, if we let
$c= \inv{(\inv{c_1}\sqcup \inv{c_2})}$, then $g\sat c$ if and only if $g\sat c_1$ and $g\sat c_2$; in other words, $c\equiv c_1\sqcap c_2$. However, those two constructions give rise to very different conditions, as the following simple example illustrates.

\begin{example}\label{ex:meet vs De Morgan}
Consider $a_1 = \inline{\onenode x}\to\inline{\oneloopleft x a}$ and $a_2=\inline{\onenode x}\to \inline{\oneedge x b y}$, and let $c_i=c_{a_i}$ for $i=1,2$. The following depicts the two equivalent conditions obtained taking the meet of $c_1$ and $c_2$ (left hand side) and applying the De Morgan duality (right hand side).

\vspace*{-2mm}\[\tikzscale{\begin{tikzpicture}[on grid,baseline=(R)]
\node[graph] (R) {$\onenode x$};
\node [left=1.5 of R] {$c_1\meet c_2\colon$};
\node[graph] (P) [below=of R] {$\looponeedge x a b y$};
\path (R) edge[->] (P);
\end{tikzpicture}
\qquad\qquad\qquad
\begin{tikzpicture}[on grid,baseline=(R)]
\node[graph] (R) {$\onenode x$};
\node [left=2 of R] {$\inv{(\inv{c_1}\join \inv{c_2})}\colon$};
\node[graph] (R1) [below=of R] {$\onenode x$};
\node[graph] (R2) [below left=of R1] {$\onenode x$};
\node[graph] (P2) [below=of R2] {$\oneloopleft x a$};
\node[graph] (R3) [below right=of R1] {$\onenode x$};
\node[graph] (P3) [below=of R3] {$\oneedge x b y$};
\path (R) edge[->] (R1)
      (R1) edge[->] (R2)
	       edge[->] (R3)
      (R2) edge[->] (P2)
      (R3) edge[->] (P3);
\end{tikzpicture}}\]
Clearly, for this particular example, the left hand condition is more concise and understandable than the right hand condition. In general, there is a trade-off: meet ($c_1\meet c_2$) creates a condition of which the depth is the \emph{maximum} of $\depth(c_1)$ and $\depth(c_2)$, whereas the branching width is the \emph{product} of $|c_1|$ and $|c_2|$, and for each of the new branches $p=(a,b)$ (which always equal $p^1\sqcap p^2$ for some branch $p^1=(a^1,b^1)$ of $c_1$ and some branch $p^2=(a^2,b^2)$ of $c_2$) the next-level widths \emph{sum up}: $|b|=|b^1|+|b^2|$. For the De Morgan-based construction ($\inv{(\inv{c_1}\join \inv{c_2})}$), on the other hand, the depth is always the maximum of $\depth(c_1)$ and $\depth(c_2)$ \emph{increased by 2}, whereas the width is merely the \emph{sum} of $|c_1|$ and $|c_2|$ (and the branch subconditions remain unchanged).\qed
\end{example}
\noindent
The following proposition shows that the constructions presented in this section are in a sense complete: they can be used to build any condition, using arbitrary arrows as upward-shift operands.
\begin{propositionE}\label{prop:construction}
Any condition $c$ can be constructed through a combination of constants $\bot_R$, inversion {\upshape ($\inv{}$)}, upshift {\upshape($\tuarr$)} and join {\upshape($\sqcup$)}.
\end{propositionE}

\begin{proofE}
We inductively define a family of transformations $t_R\of \AC R\to \AC R$ such that $t_R(c)=c$ for all $c\in\AC R$, using only the constructions allowed by the proposition. For conditions $c=(R,p_1\ccdots p_w)\in \AC R$ and branches $p_i=(a_i,b_i)$ with $a_i\of R\to P_i$ for $i\in[1,w]$, let 
\begin{align*}
t_R\colon c
   & \mapsto
     \begin{cases}
       \bot_R & \text{if } w=0 \\ 
       (\inv{t_{P_1}(b_1)}\:\tuarr\: a_1) \join t_R((R,p_2\ccdots p_w)) & \text{otherwise.} 
     \end{cases}
\end{align*}
The proof of $t_R(c)=c$ is by induction on the depth of $c\in \AC R$. The base case is due to $w=0$ if $\depth(c)=0$, and hence $t_R(c)=\bot_R=(R,\epsilon)=c$. The induction step is based on inner induction on the width $w$, using that (by the outer induction hypothesis) $t_{P_1}(b_1)=b_1$. As an intermediate step note that, if $p=(a,b)$ with $a\of R\to P$ and $b\in \AC P$ such that $t_P(b)=b$ then
\[ \inv{t_P(b)}\:\tuarr\: a = \inv b\:\tuarr a = (P,(\id_P,b))\tuarr a=(R,(a;\id_P,b)) = (R,p) \enspace. \]
The base case and induction step for the inner induction, are, respectively
\begin{align*}
t_R(c) = \begin{cases}
  \bot_R = c & \text{if }w=0 \\
  (\inv{t_{P_1}(b_1)}\:\tuarr\: a_1) \join t_R((R,p_2\ccdots p_w)) \\
  \hphantom{\bot_R} = (R,p_1) \join (R,p_2\ccdots p_w) \\
  \hphantom{\bot_R} = (R,p_1\ccdots p_w) = c & \text{otherwise.}
  \end{cases}
\end{align*}
\end{proofE}

\section{Morphisms}
\label{sec:morphisms}

The notion of \emph{morphism} of nested conditions has not received much attention in the literature. When considered at all, such as in \cite{bchk:conditional-reactive-systems,sksclo:coinductive-techniques-for-satisfiability}, morphisms are essentially based on the semantics in terms of satisfaction. Indeed, entailment (see \cref{def:entailment}) establishes a preorder on conditions over the same root: 
given conditions $c_1,c_2 \in \AB{R}$ we have $c_1 \leq c_2$ iff $c_1 \entails c_2$. 
The question that we address in this section is to establish a meaningful \emph{structural} notion of condition morphism. That is, given the fact that a condition is essentially a diagram in the category $\bC$, a structural morphism from $c_1$ to $c_2$ consists of arrows between objects of $c_1$ and $c_2$ (when viewed as diagrams),  satisfying certain conditions on the resulting subdiagrams. 

A morphism will relate the roots of the source and target conditions with an arrow, and will fix a relation between the branches of the source and target conditions. For each pair of related branches the morphism will include a set of ``next-level'' morphisms between the corresponding branch conditions, which will go the other way around (i.e., from the target branch condition to the source branch condition): this is motivated by our definition of satisfaction, which flips the direction of the relation at every subsequent nesting level. This general notion of structural morphisms will be specialized in the next section to \emph{reflective} and \emph{preservative} morphisms, which will be shown to witness entailment in the expected direction.

\begin{definition}[condition morphism]\label{def:morphism}
Let $c_i = (R_i,(a^i_1,b^i_1)\ccdots (a^i_{w_i}, b^i_{w_i}))\in \AC{R_i}$ for $i=1,2$. A \emph{condition morphism} from $c_1$ to $c_2$ is a tuple $m=(v,\Nrel,N)\of c_1 \to c_2$ such that

\begin{itemize}[nosep]
\item $v\of R_2\to R_1$ is an arrow, called the \emph{root arrow};
\item $\Nrel \subseteq [1, w_1] \times [1, w_2]$ is a relation between the branches of $c_1$ and $c_2$;
\item $N$ is a family of non-empty sets of condition morphisms indexed by $\Nrel$, that is $N = \setof{ N_{jk} \mid (j,k) \in \Nrel}$, such that $n:b^2_k \to b^1_j$ for each $n \in N_{jk}$.
\end{itemize}
The components of a condition morphism $m$ are sometimes denoted $(v_m,\Nrel_m,N_m)$.\qed
\end{definition}
\noindent
Note that the root arrow $v_m$ of a morphism $m\of c_1\to c_2$ goes in the \emph{opposite} direction of $m$, namely from the root of $c_2$ to the root of $c_1$. In general, there are no requirements on the relation $\Nrel$ of $m$.
Moreover, for every $n\in N_{jk}$, the following $n$-based subdiagram emerges, which for now does not have to satisfy any constraints --- in particular, it does not have to commute.

\begin{tikzequation}\label{eq:subsquare}
\tikzscale{\begin{tikzpicture}[on grid,node distance = 1.5cm,baseline=(a1)]
\node (R1) {$R_1$};
\node (P1) [below=of R1] {$P^1_j$};
\node (R2) [right=of R1] {$R_2$};
\node (P2) [below=of R2] {$P^2_k$};

\path
  (R1) edge [->] node (a1) [left] {$a^1_j$} (P1)
  (R2) edge [->] node [auto] {$a^2_k$} (P2)
       edge [->] node [above] {$v$} (R1)
  (P1) edge [->] node [auto] {$v_n$} (P2);

\node [left=3 of a1] {$n$-based subsquare};
\end{tikzpicture}}
\end{tikzequation}%
However, in practice we restrict to two subclasses of morphisms that satisfy further properties (defined in \cref{sec:reflective and preservative morphisms} below), both on $\Nrel$ and on the induced subsquares of \eqref{eq:subsquare}. First, however, we introduce identities and composition of morphisms and show their expected properties.
\begin{definition}[identity morphisms and composition]\label{def:morphism composition}
Let $c=(R,(a_1,b_1)\ccdots (a_{w}, b_{w})) \in \AC{R}$ be a condition; then $\id_c\of c\to c$ is defined by
\begin{equation}
\id_{c}
 = (\id_{R},\Id_{[1,w]},\setof{N_{jj} \isdef \setof{\id_{b_j}} \mid (j,j) \in \Id_{[1,w]}}) 
     \label{eq:id condition morphism} \enspace.
\end{equation}
Let $c_i = (R_i,(a^i_1,b^i_1)\ccdots (a^i_{w_i}, b^i_{w_i}))\in \AC{R_i}$ for $i=1,2,3$ and let $m_i=(v_i,\Nrel_i,N^i) \of c_i\to c_{i+1}$ be condition morphisms for $i=1,2$. Then $m_1;m_2\of c_1\to c_3$ is defined by
\begin{flalign} \label{eq:condition morphism composition}
& m_1;m_2 = (v_2;v_1,\Nrel_1;\Nrel_2, \{N_{ik} \isdef \{m;n \of b^3_k \to b^1_i \mid & \\ 
 & \qquad \qquad \exists j \in [1,w_2] \st m\of b^3_k \to b^2_j \in N^2_{jk} \land n\of b^2_j \to b^1_i \in N^1_{ij}\} \mid (i,k) \in \Nrel_1 ; \Nrel_2\})
  \enspace.& \notag 
\end{flalign}
\raisedqed[14]
\end{definition}
\begin{proposition}\label{prop:morphism composition}
Condition morphism composition is associative, with left and right identities $\id_c$.
\end{proposition}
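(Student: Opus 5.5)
The proof goes by induction on the depth of the conditions involved. Condition morphisms are defined recursively: the components of $N$ are again condition morphisms between branch conditions, of strictly smaller depth. So all equalities between morphisms are proved by induction on $\max(\depth(c_1),\ldots)$ of the conditions involved. Concretely, I would show three statements simultaneously by induction on $d$:
\begin{itemize}
\item $\id_{c_1};m = m$;
\item $m;\id_{c_2}=m$;
\item $(m_1;m_2);m_3 = m_1;(m_2;m_3)$.
\end{itemize}
Here $m\of c_1\to c_2$ and $m_i\of c_i\to c_{i+1}$, and all conditions have depth at most $d$. Two morphisms are equal exactly when their root arrows, relations and families $N$ coincide componentwise. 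Each check therefore splits into three parts.

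\emph{Identities.} For $\id_{c_1};m$ with $m=(v,\Nrel,N)\of c_1\to c_2$, the root arrow is $v;\id_{R_1}=v$ by the identity laws in $\bC$, and the relation is $\Id_{[1,w_1]};\Nrel=\Nrel$. For $(j,k)\in\Nrel$, the set $N'_{jk}$ consists of the composites $n;\id_{b^1_j}$ with $n\in N_{jk}$; the only intermediate index is $j$. Since $\depth(b^1_j)<\depth(c_1)$, the induction hypothesis gives $n;\id_{b^1_j}=n$, hence $N'_{jk}=N_{jk}$. Note the direction reversal: in \eqref{eq:condition morphism composition} the first factor $m_1$ contributes the \emph{second} component of the next-level composite. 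So the left identity at the top level becomes a right identity one level down, and symmetrically. This is exactly why both identity laws must be carried together in one induction. The right identity law is handled symmetrically. In the base case $w=0$ the relations are empty and there is nothing to check at the next level.

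\emph{Associativity.} The root arrows are $v_3;(v_2;v_1)$ and $(v_3;v_2);v_1$, equal by associativity in $\bC$. The relations are $(\Nrel_1;\Nrel_2);\Nrel_3=\Nrel_1;(\Nrel_2;\Nrel_3)$ by associativity of relational composition. Next-level nonemptiness is preserved, since each $N$-set is nonempty and a pair in the composite relation has a witness index. For $(i,l)$ in the common relation, unfolding the definition shows that both sides' $N_{il}$ are unions over intermediate indices $j,k$. On the left the elements have the form $o;(m;n)$, and on the right $(o;m);n$, with $o\in N^3_{kl}$, $m\in N^2_{jk}$, $n\in N^1_{ij}$. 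The index ranges agree: both sides range over triples with $(i,j)\in\Nrel_1$, $(j,k)\in\Nrel_2$, $(k,l)\in\Nrel_3$. The elements agree by the induction hypothesis, since these are morphisms between branch conditions of smaller depth.

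The main obstacle is the bookkeeping in associativity. One must check carefully that flattening the nested set-builder (``exists $j$ such that an element is a composite of some element of a composite set'') gives exactly the same set on both sides, with the direction reversal respected at each level. I would handle this by first proving a lemma that gives an explicit description of $N_{il}$ for a triple composite as $\{o;m;n\}$ indexed by such triples, and then comparing both bracketings against this description.
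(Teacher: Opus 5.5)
Your proof is correct. The paper states this proposition without proof, and your simultaneous structural induction on depth is the natural argument; it mirrors the depth induction the paper uses for Proposition~\ref{prop:reflective-preservative composition}. You also correctly identify the key subtlety: the next-level composite is $m;n$ with $m$ taken from the second factor, so the left and right identity laws trade places one level down and must be carried jointly.
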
  

To conclude this subsection, observe that the potential morphisms between two given conditions are finite, even if their number can be very large. In principle it is possible to design an algorithm that enumerates all potential morphisms between two conditions. Also, by exploting the results of the next section, some of those morphisms could witness the fact that one condition entails the other. However, as discussed in the concluding section we don't expect that this approach can be competitive with existing methods for checking entailment. 

\subsection{Reflective and preservative morphisms}
\label{sec:reflective and preservative morphisms}

We specialize here the general notion of morphism in order to capture entailment between conditions. \emph{Preservative} morphisms will witness entailment from the source to the target condition, and \emph{reflective} ones similarly, but in the other direction. We obtain this by constraining the relations between the branches of the conditions and by imposing certain properties on the resulting squares formed by the arrows of the morphism and the branches of the conditions, as in \eqref{eq:subsquare}. The properties are defined in terms of \emph{reflection squares} and \emph{preservation squares}, which are defined below. Intuitively, such properties ensure that the morphism preserves the satisfaction of branches in the expected direction.

\begin{definition}[reflection square]\label{def:reflection square}
Let $a_1\of R_1\to P_1$, $a_2\of R_2\to P_2$, $v\of R_2\to R_1$ and $v'\of P_1\to P_2$, and consider the (not necessarily commuting) diagram on the left:
\begin{tikzequation}\label{eq:reflection square}
\tikzscale{\begin{tikzpicture}[on grid,node distance = 1.5cm,baseline=(a1)]
\node (R1) {$R_1$};
\node (P1) [below=of R1] {$P_1$};
\node (R2) [right=of R1] {$R_2$};
\node (P2) [below=of R2] {$P_2$};

\path
  (R1) edge [->] node (a1) [auto] {$a_1$} (P1)
  (R2) edge [->] node [auto] {$a_2$} (P2)
       edge [->] node [above] {$v$} (R1)
  (P1) edge [->] node [auto] {$v'$} (P2);
\end{tikzpicture}}
\qquad \qquad
\tikzscale{\begin{tikzpicture}[on grid,node distance = 1.5cm,baseline=(a1)]
\node (R1) {$R_1$}; 
\node (P1) [below=of R1] {$P_1$};
\node (R2) [right=of R1] {$R_2$};
\node (P2) [below=of R2] {$P_2$};

\path
  (R1) edge [->] node (a1) [auto] {$a_1$} (P1)
  (R2) edge [->] node [auto] {$a_2$} (P2)
       edge [->] node [above] {$v$} (R1)
  (P1) edge [->] node [auto] {$v'$} (P2);

\node (G) [below left=of P1] {$G'$};
\path
  (R1) edge [->,bend right=25] node[left] {$g'$} (G)
  (P2) edge [->,bend left=25] node[auto,inner sep=1,pos=.3] {$h'$} (G);
  
\path (G) edge[white] node[pos=.7,black] {$\circlearrowright$} (P1);
\end{tikzpicture}}
\end{tikzequation}%
The diagram is called a \emph{reflection square} if for any pair $g',h'$ that makes the outer subdiagram to the right commute (i.e., such that $v;g'=a_2;h'$), its bottom left subdiagram commutes (i.e., $g'=a_1;v';h'$).\qed
\end{definition}

\noindent
Therefore a reflection square guarantees that if the branch $a_2$ satisfies $v;g'$  with witness $h'$, then $a_1$ potentially satisfies $g'$ with witness $v';h'$. The dual to a reflection square is a \emph{preservation square}, which ensures that if $a_1$ satisfies $g$ with witness $h$ then $a_2$ can potentially satisfy $v;g$ with some witness $h'$ such that $h=v';h'$:

\begin{definition}[preservation square]\label{def:preservation square}
Let $a_1\of R_1\to P_1$, $a_2\of R_2\to P_2$, $v\of R_2\to R_1$ and $v'\of P_1\to P_2$, and consider the (not necessarily commuting) diagram on the left:
\begin{tikzequation}\label{eq:preservation square}
\tikzscale{\begin{tikzpicture}[on grid,node distance = 1.5cm,baseline=(a1)]
\node (R1) {$R_1$};
\node (P1) [below= of R1] {$P_1$};
\node (R2) [right= of R1] {$R_2$};
\node (P2) [below= of R2] {$P_2$};

\path
  (R1) edge [->] node (a1) [auto] {$a_1$} (P1)
  (R2) edge [->] node [auto] {$a_2$} (P2)
       edge [->] node [above] {$v$} (R1)
  (P1) edge [->] node [auto] {$v'$} (P2);
\end{tikzpicture}}
\qquad \qquad
\tikzscale{\begin{tikzpicture}[on grid,node distance = 1.5cm,baseline=(a1)]
\node (R1) {$R_1$};
\node (P1) [below=of R1] {$P_1$};
\node (R2) [right=of R1] {$R_2$};
\node (P2) [below=of R2] {$P_2$};

\path
  (R1) edge [->] node (a1) [auto] {$a_1$} (P1)
  (R2) edge [->] node [auto] {$a_2$} (P2)
       edge [->] node [above] {$v$} (R1)
  (P1) edge [->] node [auto] {$v'$} (P2);

\node (G) [below left=of P1] {$G$};
\path (R1) edge[->,bend right=25] node[auto] (g) [left] {$g$} (G)
      (P1) edge[->] node[below right,inner sep=2,pos=.4] {$h$} (G)
      (P2) edge[dashed,->,bend left=25] node[auto,inner sep=1,pos=.3] (h') {$h'$} (G);
      
\path (P1) edge[white] node[black,pos=.4] {$\circlearrowright$} (g)
           edge[white] node[black,pos=.4] {$\circlearrowright$} (h');
\end{tikzpicture}}
\end{tikzequation}%
The diagram is called a \emph{preservation square} if for any pair of arrows $g,h$ such that $g=a_1;h$, there is an arrow $h'$ such that $h=v';h'$ and $v;g=a_2;h'$.\qed
\end{definition} 

\begin{example}[reflection and preservation squares]\label{ex:square}
Consider the following (non-commuting) diagrams:
\begin{tikzequation}\label{eq:ex-squares}
\tikzscale{\begin{tikzpicture}[on grid,auto,node distance = 1.5cm,baseline=(a1)]
\node[graph] (R1) {\twonode x z};
\node[left=0 of R1.west,inner sep=1] {$R_1$};
\node[graph,below=of R1] (P1) {\onenode z};
\node[left=0 of P1.west,inner sep=1] {$P_1$};
\node[graph,right=2.5 of R1] (R2) {\threenode x y z}; 
\node[right=0 of R2.east,inner sep=1] {$R_2$};
\node[graph,below=of R2] (P2) {\twonode y z};
\node[right=0 of P2.east,inner sep=1] {$P_2$};
  
\path
   (R1) edge[->] node[left] (a1) {$a_1$} node[right] {$\mapping{x&z}$} (P1)
   (R2) edge[->] node[left] {$a_2$} node[right] {$\mapping{x&y}$} (P2)
        edge[->] node[above] {$v$} node[below,inner sep=0] {$\mapping{y&z}$} (R1)
   (P1) edge[->] node {$v'$} (P2);
\end{tikzpicture}}
\qquad
\tikzscale{\begin{tikzpicture}[on grid,auto,node distance = 1.5cm,baseline=(a)]
\node[graph] (R1) {\onenode x};
\node[left=0 of R1.west,inner sep=1] {$R_1$};
\node[graph,below=of R1] (P1) {\oneloopleft x a};
\node[left=0 of P1.west,inner sep=1] {$P_1$};
\node[graph,right=3 of R1] (R2) {\twonode x y}; 
\node[below left=0 of R2.south west,inner sep=-2] {$R_2$};
\node[graph,below=of R2] (P2) {\looponeedge x a a y};
\node[right=0 of P2.east,inner sep=1] {$P_2$};
  
\path
   (R1) edge[->] node (a) {$a_1$} (P1)
   (R2) edge[->] node {$a_2$} (P2)
        edge[->] node[above] {$v$} node[below,inner sep=0] {$\mapping{y&x}$} (R1)
   (P1) edge[->] node {$v'$} (P2);
\end{tikzpicture}}
\qquad
\tikzscale{\begin{tikzpicture}[on grid,auto,node distance = 1.5cm,baseline=(a)]
\node[graph] (R1) {\nodeedge z x a y};
\node[left=0 of R1.west,inner sep=1] {$R_1$};
\node[graph,below=of R1] (P1) {\oneedge x a y};
\node[left=0 of P1.west,inner sep=1] {$P_1$};
\node[graph,right=3 of R1] (R2) {\oneedge x a y}; 
\node[right=0 of R2.east,inner sep=1] {$R_2$};
\node[graph,below=of R2] (P2) {\spangraph z a x a y};
\node[right=0 of P2.east,inner sep=1] {$P_2$};
  
\path
   (R1) edge[->] node[left] (a) {$a_1$} node[right] {$\mapping{z&x}$} (P1)
   (R2) edge[->] node[right] {$a_2$} node[left] {$\mapping{y&z}$} (P2)
        edge[->] node[above] {$v$} (R1)
   (P1) edge[->] node {$v'$} (P2);
\end{tikzpicture}}
\end{tikzequation}%
The left hand side and middle diagrams are reflection squares, whereas the right hand side is a preservation square. (This will follow from \cref{prop:reflection square,prop:preservation square} below; see \cref{ex:preservation square,ex:reflection square}).
\end{example}

\noindent
The underlying principle for a morphism $m$ from $c_1$ to $c_2$ that reflects satisfaction is that it must identify, for every $c_2$-branch $p^2_i$, a $c_1$-branch $p^1_j$ that it entails. Therefore we require the relation between branches to induce a function from $c_2$-branches to $c_1$-branches. Additionally, the subsquares formed by the root arrows and the branches of $c_1$ and $c_2$ (see \eqref{eq:subsquare}) must be reflection squares. Dually, for a morphism that preserves satisfaction, we require a function from $c_1$-branches to $c_2$-branches, and the subsquares should be preservation squares.

\begin{definition}[reflective and preservative morphisms]\label{def:reflective-preservative}~
Let $c_i = (R_i,p^i_1\ccdots p^i_{w_i})\in \AC{R_i}$ for $i=1,2$, and let $m=(v,\Nrel,N)\of c_1\to c_2$ be a condition morphism.
\begin{itemize}[nosep]
\item $m$ is called \emph{reflective} if the relation $\Nrel \subseteq [1,w_1]\times [1,w_2]$ is a function from $c_2$-branches to $c_1$-branches (i.e., $\forall k \in [1,w_2]\st \exists!\; j \in [1,w_1]\st (j,k)\in \Nrel$), and for all $(j,k) \in \Nrel$  it holds that $N_{jk}=\setof{n}$ for some reflection morphism $n$ such that the $n$-based subsquare (see \eqref{eq:subsquare}) is a reflection square; 
\item $m$ is called \emph{preservative} if relation $\Nrel$ is a function from $c_1$-branches to $c_2$-branches (i.e., $\forall j \in [1,w_1]\st \exists!\; k \in [1,w_2]\st (j,k)\in \Nrel$), and for all $(j,k) \in \Nrel$ it holds that $N_{jk}=\setof{n}$  for some preservation morphism $n$ such that the $n$-based subsquare is a preservation square.\qed 
\end{itemize}
\end{definition}
\noindent
Note that if $\depth(c_2)=0$ [resp.~$\depth(c_1)=0$] for the target [resp.\ source] of some reflective [preservative] condition morphism $m\of c_1\to c_2$, then $N=\varnothing$, hence the requirement on the family $N$ is vacuously fulfilled. We will show later in \cref{prop:reflective-preservative composition} that reflective and preservative morphisms compose. 

We now come to the core property of reflective and preservative morphisms, namely that they guarantee a one-way correspondence between the models satisfying their source and target conditions: from target to source for reflective morphisms, and from source to target for preservative ones.

\begin{propositionE}[reflective morphisms reflect satisfaction]\label{prop:reflective}
Let $c_i\in \AC{R_i}$ be conditions for $i=1,2$. If $m:c_1\to c_2$ is a  reflective condition morphism, then $v_m;g\sat c_2$ implies $g \sat c_1$ for all arrows $g\of R_1\to G$. 
\end{propositionE}

\begin{proofE}
By induction on $\depth(c_1)+\depth(c_2)$. Let $c_i=(R_i,p^i_1\cdots p^i_{w_i})$ such that $p^i_j=(a^i_j,b^i_j)$ for $i=1,2$ and all $j\in[1,w_i]$, and let $m=(v_m,\Nrel_m,N_m)$.
\begin{description}
\item[Base case] If $\depth(c_1)+\depth(c_2)=0$, it follows that $w_2=0$; hence $v_m;g\sat c_2$ is false.

\item[Induction step.] Assume $v_m;g\sat c_2$ with responsible branch $p^2_k = (a^2_k,b^2_k)$ (for some $k\in [1,w_2]$) and witness $h$, meaning $v_m;g=a^2_k;h$ and \tagone $h\not\sat b^2_k$. Because $m$ is a  reflective condition morphism, there is exactly one pair $(j,k) \in \Nrel_m$ for some $j\in[1,w_1]$ and a single $m'\in N_{jk} \in N$ such that $a^1_j,v_m,v_{m'},a^2_k$ is a reflection square, meaning \tagtwo $g=a^1_j;v_{m'};h$.

\smallskip
Recall that $m'\of b^2_k\to b^1_j$ is a reflective condition morphism and observe that $\depth(b^2_k)+\depth(b^1_j) < \depth(c_1)+\depth(c_2)$; hence by the induction hypothesis, $m'$ fulfills the property. Therefore, if $v_{m'};h\sat b^1_j$ then $h\sat b^2_k$, which contradicts \tagone; it follows that $v_{m'};h\not\sat b^1_j$,  which with \tagtwo allows us to conclude $g\sat p^1_j$, and hence $g\sat c_1$. 
\end{description}
\end{proofE}

\noindent
The following example shows a case in which a reflection morphism can indeed serve as a proof of entailment between conditions, but also a case in which there is entailment but no reflection morphism.\ifarxiv
\footnote{In the published version \cite{GCM-original}, this example contains an error: it is there stated that the morphism goes from $c_2$ to $c_3$, and indeed the dashed arrows are (incorrectly) drawn in the reverse direction.}
\fi

\begin{example}\label{ex:reflective morphism}
In \cref{ex:satisfaction} we have argued that there exist entailments $c_1\entails c_2$ and $c_2\entails c_3$ for the conditions of \cref{fig:conditions}. In fact, there exists a reflective morphism 
\ifarxiv
$m\of c_3\to c_2$\ignorespaces
\else
$m\of c_2\to c_3$\ignorespaces
\fi
, depicted by the dashed arrows in the following diagram, which by \cref{prop:reflective}, given that $v_m=\id$, establishes $c_2\entails c_3$:
\[\vspace*{-1mm}\tikzscale{\begin{tikzpicture}[on grid]
  \node[graph] (30) {\onenode{x}};
  \node[left=0 of 30.west,inner sep=1] {$c_3$};
  \node[graph,below left=of 30] (31) {\oneloop{x}{a}}; 
  \node[left=0 of 31.west,inner sep=1] {$c_{31}$};
  \node[graph,below right=of 30] (32) {\oneedge{x}{b}{y}}; 
  \node[left=0 of 32.west,inner sep=1] {$c_{32}$};
  \node[graph,below=1.2 of 32] (321) {\onetwoedge{x}{b}{y}{c}{v}{c}{z}}; 
  \node[left=0 of 321.west,inner sep=1] {$c_{321}$};
  \node[graph,below=1.2 of 321] (3211) {\twoedge{x}{b}{y}{c}{z}}; 
  \node[left=0 of 3211.west,inner sep=1] {$c_{3211}$};

  \path (30) edge[->] (31)
        (30) edge[->] (32)
		(32) edge[->] (321)
		(321) edge[->] node[left] {\mapping{v&z}} (3211);

  \node[graph] (20) [right=6 of 30] {\onenode{x}};
  \node[right=0 of 20.east,inner sep=1] {$c_2$};
  \node[graph,below=of 20] (21) {\oneedge{x}{b}{y}}; 
  \node[right=0 of 21.east,inner sep=1] {$c_{21}$};
  \node[graph,below left=1.2 and 1.1 of 21] (211) {\oneedgeloop{x}{b}{y}{a}};
  \node[left=0 of 211.west,inner sep=1] {$c_{211}$};
  \node[graph,below right=1.2 and 1.1 of 21] (212) {\twoedge{x}{b}{y}{c}{z}};
  \node[right=0 of 212.east,inner sep=1] {$c_{212}$};
  
  \path (20) edge[->] (21)
        (21) edge[->] (211)
        (21) edge[->] (212);
\ifarxiv
\path[dashed]
  (20) edge[->] (30)
  (32) edge[->] (21)
  (212) edge[->,bend left=25] (321);
\else
\path[dashed]
  (30) edge[->] (20)
  (21) edge[->] (32)
  (321) edge[->,bend right=25] node[below,inner sep=0] {$\mapping{v&z}$} (212);
\fi
\end{tikzpicture}}\vspace*{-2mm}\]
In this case, the two subdiagrams that need to be reflection squares both commute and have identities as their top arrow; it follows from \cref{prop:reflection square}.\ref{iso simple reflection square} below that they are indeed reflection squares (\emph{simple} ones, in terms of that proposition).

There does \emph{not}, however, exist a morphism from $c_2$ to $c_1$ that establishes the entailment $c_1\entails c_2$, even though that entailment holds. To see this, note that such a morphism would have to include arrows from both $c_{111}=\inline{\looponeedge b x a y}$ and $c_{112}=\inline{\looponeedge b x c y}$ to either $c_{211}=\inline{\oneedgeloop x b y a}$ or $c_{212}=\inline{\twoedge x b y c z}$ (see \cref{fig:conditions}), which obviously do not exist.\qed
\end{example}

\begin{propositionE}[preservative morphisms preserve satisfaction]\label{prop:preservative}
Let $c_i\in \AC{R_i}$ be  conditions for $i=1,2$. If $m\of c_1\to c_2$ is a preservative condition  morphism, then $g\sat c_1$ implies $v_m;g \sat c_2$ for all arrows $g\of R_1\to G$. 
\end{propositionE}

\begin{proofE}
By induction on $\depth(c_1)+\depth(c_2)$. Let $c_i=(R_i,p^i_1\cdots p^i_{w_i})$ such that $p^i_j=(a^i_j,b^i_j)$ for  $i=1,2$ and all $j\in[1,w_i]$, and let $m=(v_m,\Nrel_m,N_m)$.
\begin{description}
\item[Base case] If $\depth(c_1)+\depth(c_2)=0$, it follows that $|c_1|=0$; hence $g\sat c_1$ is false. 
\item[Induction step.] Assume $g\sat c_1$, due to $g\sat p^1_j$ for some $j\in [1,w_1]$; and let $h$ be the witness, meaning $g=a^1_j;h$ and \tagone $h\not\sat b^1_j$. Because $m$ is a preservative condition morphism, there is exactly one pair $(j,k) \in \Nrel_m$ for some $k\in[1,w_2]$ and a single $m'\in N^{jk} \in N$ such that $a^1_j,v_m,v_{m'},a^2_k$ form a preservation square, meaning that there is an $h'$ such that $h=v_{m'};h'$ and  \tagtwo  $v_m;g=a^2_k;h'$. 

\smallskip
Recall that $m'\of b^2_k\to b^1_j$ is a preservative condition morphism and observe that $\depth(b^2_k)+\depth(b^1_j) < \depth(c_1)+\depth(c_2)$; hence by the induction hypothesis, $m'$ fulfils the property. Therefore, if $h'\sat b^2_k$ then $h=v_{m'};h'\sat b^1_j$, which contradicts \tagone; it follows that $h'\not\sat b^2_k$, from which together with \tagtwo we may conclude $v_m;g\sat p^2_k$ and hence $v_m;g\sat c_2$. 
\end{description}
\end{proofE}
\noindent
Thus, both reflective and preservative morphisms provide proof techniques for entailment: in particular, every root-preserving reflective morphism $m\of c_1\to c_2$ as well as  every root-preserving preservative morphism $m\of c_2\to c_1$ constitutes evidence for $c_2\entails c_1$. However, it turns out that reflective morphisms are strictly more powerful than preservative morphisms, in the following sense.

\begin{restatable}[Preservative morphism inversion]{proposition}{preservativeMorphismInversion}%
\label{prop:reflective beats preservative}
If $m\of c\to c'$ is a preservative morphism of which $v_m$ is split mono, then there is a reflective morphism $m'\of c'\to c$ such that, moreover, $m;m'=\id_c$.
\end{restatable}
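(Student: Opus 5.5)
The plan is to build $m'$ by recursion on $\depth(c)+\depth(c')$, inverting $m$ level by level. Root, relation, subsquares and the composite are handled in turn. The same induction must also deliver the split-mono hypothesis at every deeper level, so I first show that it propagates downwards automatically.

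\emph{Root arrow and relation.} Write $m=(v,\Nrel,N)$ with $v\of R'\to R$, and let $s\of R\to R'$ be a left inverse of $v$, so $v;s=\id_{R'}$. Take $v_{m'}\isdef s$, which has the type $R\to R'$ that a morphism $c'\to c$ needs. Since $m$ is preservative, $\Nrel$ is a function $f$ from $c$-branches to $c'$-branches. Set $\Nrel_{m'}\isdef\Nrel^{-1}$. Then for every $c$-branch $j$ there is exactly one related $c'$-branch, namely $f(j)$, which is precisely the functionality that \cref{def:reflective-preservative} requires of a reflective morphism $c'\to c$.

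\emph{Inner split monos.} For $(j,k)\in\Nrel$ with $N_{jk}=\setof{n}$ and $v_n\of P_j\to P'_k$, apply the preservation square to the pair $g=a_j$, $h=\id_{P_j}$. This yields $h_0\of P'_k\to P_j$ with $v_n;h_0=\id_{P_j}$ and $v;a_j=a'_k;h_0$. So every $v_n$ is again split mono, and the induction hypothesis applies to the preservative $n\of b'_k\to b_j$. It gives a reflective $n'\of b_j\to b'_k$, and we put $N'_{kj}\isdef\setof{n'}$. I intend to use the specific retraction $h_0$ as the bottom arrow of the new subsquare. This choice must be made compatibly with the arrow $v_{n'}$ that the recursion produces, so the induction statement should fix the root arrow of the inverse to be a retraction obtained from the preservation square in this way.

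\emph{Reflection squares (main obstacle).} For each $(k,j)$ in $\Nrel^{-1}$ we must show that the square with left arrow $a'_k$, top arrow $s$, right arrow $a_j$ and bottom arrow $h_0$ is a reflection square. So take $g'\of R'\to G'$ and $h'\of P_j\to G'$ with $s;g'=a_j;h'$; we need $g'=a'_k;h_0;h'$. Applying the preservation square to $g=s;g'$ and $h=h'$ gives some $h''$ with $h'=v_n;h''$ and $v;s;g'=a'_k;h''$, and the left-hand side is $g'$. The remaining task is to identify $a'_k;h''$ with $a'_k;h_0;v_n;h''$. I would first try to deduce this from the equation $v;a_j=a'_k;h_0$ together with $v;s=\id$. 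If that fails, I would instead choose the retractions more cleverly, for instance via pushouts in the presheaf topos. This is the step I expect to be delicate.

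\emph{Composite.} Finally I compute the composite from \eqref{eq:condition morphism composition}. Its root arrow is $v;s=\id$ when the order is taken so that the left inverse lands on the identity; I will check carefully which order of composition the statement intends. Its relation is $\Nrel;\Nrel^{-1}$ or $\Nrel^{-1};\Nrel$, which is the identity only where $f$ is injective, respectively surjective. Its inner morphisms are $n;n'$, which are identities by the induction hypothesis. Associativity is already given by \cref{prop:morphism composition}.
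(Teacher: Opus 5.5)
Your construction of $m'$ is the paper's: the root arrow is a retraction $s$ of $v_m$, the relation is $\Nrel^{-1}$, the bottom arrows are the retractions $h_0$ supplied by the preservation squares (\cref{prop:preservation square}), and the inner morphisms come from the recursion. You are right that the induction hypothesis must be strengthened. This actually repairs an omission in the paper's proof, which uses $r_i$ as the root arrow of $m'_i$ without ensuring that the recursion delivers it. The clean form is ``for \emph{every} retraction $s$ of $v_m$ there is a reflective $m'\colon c'\to c$ with $v_{m'}=s$''; both base case and step go through for arbitrary $s$. The step you call delicate is immediate along your first idea. From $s;g'=a_j;h'$ you get $a'_k;h_0;h'=v;a_j;h'=v;s;g'=g'$. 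The paper phrases this as a simple reflection square with equalising arrow $v$, using that $v;s=\id$ is epi.

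The genuine gap is the final step, and it cannot be closed because the clause $m;m'=\id_c$ is false in general. By \eqref{eq:condition morphism composition} the root arrow of $m;m'$ is $v_{m'};v_m$. So $m;m'=\id_c$ would make $v_m$ split epi as well as split mono, hence iso. For instance, take $c=\bot_R$, $c'=\bot_{R'}$ and $v_m$ the inclusion of a one-node graph $R'$ into a two-node graph $R$. Then $m=(v_m,\varnothing,\varnothing)$ is vacuously preservative with split mono root, yet no $m'\colon c'\to c$ whatsoever has $m;m'=\id_c$.

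The relation $\Nrel;\Nrel_{m'}$ likewise cannot be the identity when $\Nrel$ is non-injective; duplicate a branch of $c$ and send both copies to one branch of $c'$ along $v_m=\id$. If $\Nrel$ sends $i\neq i'$ to the same $j$, then the pair $(j,i)\in\Nrel_{m'}$ needed for $(i,i)$ also produces $(i',i)$.

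Switching to $m';m$ does not rescue the claim. Its root arrow is $v_m;s=\id$, but its inner morphisms are $n;n'$, whose root arrows $h_0;v_n$ are identities only if $v_n$ is iso. A counterexample is $v_m=\id$ with one branch on each side and $v_n$ a non-invertible split mono. In $m;m'$, by contrast, the inner composites are $n';n$. Because the orientation flips at every nesting level, each version of the identity needs the other version one level down. So ``identities by the induction hypothesis'' fails in both readings.

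The paper's proof simply asserts $m;m'=\id_c$; its base case even writes $v;r=\id_{R^c}$ with mismatched types. What your argument, like the paper's, really establishes is the existence of a reflective $m'\colon c'\to c$ with $v_{m'}$ a retraction of $v_m$ and $\Nrel_{m'}=\Nrel^{-1}$. The identity clause must be dropped or weakened.
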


\begin{textAtEnd}
We postpone the proof of this result until we have introduced some further properties of morphisms in the next section, which are needed for the proof.
\end{textAtEnd}

\subsection{Other properties of squares and morphisms}
\label{sec:squares}

We present here some useful characterizations of reflection and preservation squares, as well as some properties that are used, for example, to show that reflective and preservative morphisms compose. 

\begin{propositionE}[reflection squares]\label{prop:reflection square}~
\begin{enumerate}[nosep]
\item\label{reflection-pushout} The left hand diagram of \eqref{eq:reflection square prop} is a reflection square if and only if, for the pushout $g,h$ of the span $v,a_2$ as drawn in the middle diagram, the bottom left subdiagram commutes, i.e., $g=a_1;v';h$.

\begin{tikzequation}\label{eq:reflection square prop}
\tikzscale{\begin{tikzpicture}[on grid,node distance = 1.5cm,baseline=(a1)]
\node (R1) {$R_1$};
\node (P1) [below=of R1] {$P_1$};
\node (R2) [right=of R1] {$R_2$};
\node (P2) [below=of R2] {$P_2$};

\path
  (R1) edge [->] node (a1) [auto] {$a_1$} (P1)
  (R2) edge [->] node [auto] {$a_2$} (P2)
       edge [->] node [above] {$v$} (R1)
  (P1) edge [->] node [auto] {$v'$} (P2);
\end{tikzpicture}}
\qquad
\tikzscale{\begin{tikzpicture}[on grid,node distance = 1.5cm,baseline=(a1)]
\node (R1) {$R_1$};
\node (P1) [below=of R1] {$P_1$};
\node (R2) [right=of R1] {$R_2$};
\node (P2) [below=of R2] {$P_2$};

\path
  (R1) edge [->] node (a1) [auto] {$a_1$} (P1)
  (R2) edge [->] node [auto] {$a_2$} (P2)
       edge [->] node [above] {$v$} (R1)
  (P1) edge [->] node [auto] {$v'$} (P2);

\node (G) [below left=of P1] {$G$};
\path
  (R1) edge [->,bend right=25] node[left] {$g$} (G)
  (P2) edge [->,bend left=25] node[pos=.3,auto,inner sep=2] {$h$} (G);
   
\path (G) edge[pushout] +(+3.5mm,3.5mm);
\path (G) edge[white] node[pos=.7,black] {$\circlearrowright$} (P1);
\end{tikzpicture}}
\qquad
\tikzscale{\begin{tikzpicture}[on grid,auto,node distance = 1.5cm,baseline=(a1)]
\node (R1) {$R_1$};
\node (P1) [below=of R1] {$P_1$};
\node (R2) [right=of R1] {$R_2$};
\node (P2) [below=of R2] {$P_2$};
\path
  (R1) edge [->] node (a1) [auto] {$a_1$} (P1)
  (R2) edge [->] node [auto] {$a_2$} (P2)
       edge [->] node [above] {$v$} (R1)
  (P1) edge [->] node [auto] {$v'$} (P2);
\node (E) [right=of R2] {$E$};
\path (E) edge [->] node[above] {$e$} (R2);
\end{tikzpicture}}
\end{tikzequation}%
\end{enumerate}
The left hand diagram of \eqref{eq:reflection square prop} is called a \emph{simple reflection square} if there is an equalising arrow $e$ as in the right hand diagram (i.e., such that $e;a_2=e;v;a_1v'$) such that $e;v$ is epi.
\begin{enumerate}[resume,nosep]
\item Every simple reflection square is a reflection square.

\item\label{iso simple reflection square} If the left hand diagram of \eqref{eq:reflection square prop} commutes (i.e., $a_2=v;a_1;v'$) and $v$ is epi, then it is a simple reflection square.

\item\label{simple reflection square equaliser} In a simple reflection square, if $e$ is the equaliser of $a_2$ and $v;a_1;v'$ then $e;v$ is epi.

\item In the outer diagram of \eqref{eq:reflection square section}, if $v$ is split epi with section $s$ such that $s;a_2=a_1;v'$, then the diagram is a simple reflection square.

\vspace*{-3mm}\begin{tikzequation}\label{eq:reflection square section}
\tikzscale{\begin{tikzpicture}[on grid,node distance = 1.5cm,baseline=(a1)]
\node (R1) {$R_1$};
\node (P1) [below=2 of R1] {$P_1$};
\node (R2) [right=3 of R1] {$R_2$};
\node (P2) [below=2 of R2] {$P_2$};

\path
  (R1) edge [->] node [left] (a1) {$a_1$} (P1)
  (R2) edge [->] node [auto] {$a_2$} (P2)
       edge [->] node [above] (v) {$v$} (R1)
  (P1) edge [->] node [below] (v') {$v'$} (P2);
  
\node (S) [below right=1 and 1.5 of R1] {$R_1$};

\path (S) edge[->] node[below left,pos=.4,inner sep=3] {$\id$} (R1)
          edge[->] node[below right,pos=.4,inner sep=3] {$s$} (R2);
          
\path (S) edge[white] node[black] {$\circlearrowright$} (v')
          edge[white] node[black] {$\circlearrowright$} (v);
\end{tikzpicture}}
\end{tikzequation}%

\item Simple reflection squares compose; i.e., in
\begin{tikzequation}\label{eq:reflection squares compose}
\tikzscale{\begin{tikzpicture}[on grid,node distance = 1.5cm,baseline=(a1)]
\node (R1) {$R_1$};
\node (P1) [below=of R1] {$P_1$};
\node (R2) [right=of R1] {$R_2$};
\node (P2) [below=of R2] {$P_2$};
\node (R3) [right=of R2] {$R_3$};
\node (P3) [below=of R3] {$P_3$};

\path
  (R1) edge [->] node (a1) [auto] {$a_1$} (P1)
  (R2) edge [->] node [auto] {$a_2$} (P2)
       edge [->] node [above] {$v_1$} (R1)
  (P1) edge [->] node [auto] {$v_1'$} (P2)
  (R3) edge [->] node [auto] {$a_3$} (P3)
       edge [->] node [above] {$v_2$} (R2)
  (P2) edge [->] node [auto] {$v_2'$} (P3);
\end{tikzpicture}}
\end{tikzequation}%
if the left and right hand subdiagrams are simple reflection squares, then so is the outer subdiagram.

\item In a reflection square, if the upper arrow $v$ is an isomorphism, then the diagram commutes.

\item Reflection squares compose; i.e., in \eqref{eq:reflection squares compose}, 
if both the left hand and the right hand subdiagrams are reflection squares, then so is the outer subdiagram. \label{reflection squares compose}

\item In a reflection square as in \eqref{eq:reflection square}, the arrow $v$ is epi. (Proof due to Daniel Schepler \cite{mathexchange}.)
\end{enumerate}
\end{propositionE}

\begin{proofE}~
\begin{enumerate}[defsep]
\item (Only if) Obvious, as the pushout arrows $g,h$ make the outer diagram commute. (If) For a pair $g', h'$ as in the middle diagram of \eqref{eq:reflection square}, there exists a (unique) $k\of G\to G'$ (where $G$ is the pushout object as in \eqref{eq:reflection square prop}) such that $g'=g;k$ and $h'=h;k$; hence $g'= g;k= a_1; v';h;k=a_1;v';h'$.

  \item The equalising arrow guarantees $e;a_2;h'=e;v;a_1;v';h'$, and the commuting outer subdiagram guarantees $e;a_2;h'=e;v;g'$; hence $e;v;a_1;v';h'=e;v;g'$. Since $e;v$ is epi, it follows that $a_1;v';h'=g'$.

\item Since the diagram already commutes, $e=\id$ is an equaliser and $e;v=v$ is epi, as required.

\item In the right hand diagram of \eqref{eq:reflection square prop}, let $e$ be such that $e;a_2=e;v;a_1;v'$ and $e;v$ is epi, and consider the equaliser $e'\of E'\to R_2$ of $a_2$ and $v;a_1;v'$. By the universal property of $e'$, there is a (unique) $k\of E\to E'$ such that $e=k;e'$. But then, since $e;v$ ($=k;e';v$) is epi, so is $e';v$.

\[\tikzscale{\begin{tikzpicture}[on grid,auto,node distance = 1.5cm,baseline=(a1)]
\node (R1) {$R_1$};
\node (P1) [below=of R1] {$P_1$};
\node (R2) [right=of R1] {$R_2$};
\node (P2) [below=of R2] {$P_2$};

\path
  (R1) edge [->] node (a1) [left] {$a_1$} (P1)
  (R2) edge [->] node [left] {$a_2$} (P2)
       edge [->] node [above] {$v$} (R1)
  (P1) edge [->] node [auto] {$v'$} (P2);

\node (E) [right=of R2] {$E$};
\path (E) edge [->] node[above] {$e$} (R2);

\node (E') [below=of E] {$E'$};
\path (E') edge [right hook->] node[pos=.4,inner sep=1] {$e'$} (R2)
      (E) edge [dashed,->] node[right] {$k$} (E');
\end{tikzpicture}}\]

\item The section $s$ in \eqref{eq:reflection square section} satisfies the conditions for $e$ in the right hand diagram of \eqref{eq:reflection square prop} (noting that $s;v=\id$ is epi).

\item Consider the following diagram, where $e_1$ is such that \tagone $e_1;a_2=e_1;v_1;a_1;v_1'$ and $e_1;v_1$ is epi, $e_2$ is such that \tagtwo $e_2;a_3=e_2;v_2;a_2;v_2'$ and $e_2;v_2$ is epi, and $e_1',e_2'$ form the pullback of $e_1$ and $e_2;v_2$.
\[\tikzscale{\begin{tikzpicture}[on grid,auto,node distance = 1.5cm,baseline=(a1)]
\node (R1) {$R_1$};
\node (P1) [below=of R1] {$P_1$};
\node (R2) [right=of R1] {$R_2$};
\node (P2) [below=of R2] {$P_2$};
\node (R3) [right=of R2] {$R_3$};
\node (P3) [below=of R3] {$P_3$};

\path
  (R1) edge [->] node (a1) [auto] {$a_1$} (P1)
  (R2) edge [->] node [auto] {$a_2$} (P2)
       edge [->] node [above] {$v_1$} (R1)
  (P1) edge [->] node [auto] {$v_1'$} (P2)
  (R3) edge [->] node [auto] {$a_3$} (P3)
       edge [->] node [above] {$v_2$} (R2)
  (P2) edge [->] node [auto] {$v_2'$} (P3);
  
\node (E1) [above=of R2] {$E_1$};
\node (E2) [right=of R3] {$E_2$};
\node (E) [above=of E2] {$E$};

\path
  (E) edge [->] node {$e_1'$} (E2)
      edge [->] node[above] {$e_2'$} (E1)
  (E1) edge [left hook->] node {$e_1$} (R2)
  (E2) edge [left hook->] node[above] {$e_2$} (R3);

\path (E) edge[pushout] +(-3.5mm,-3.5mm);
\end{tikzpicture}}\]
Using also that \tagthree the pullback square commutes, we can deduce
\begin{align*}
e'_1;e_2;v_2;v_1;a_1;v_1';v_2'
 & \stackrel{\tagthree}{=} e_2';e_1;v_1;a_1;v_1';v_2' \\
 & \stackrel{\tagone}= e_2';e_1;a_2;v_2' \\
 & \stackrel{\tagthree}= e_1';e_2;v_2;a_2;v_2' \\
 & \stackrel{\tagtwo}= e_1';e_2;a_3
\end{align*}
Moreover, epis are preserved by pullback (in presheaf toposes), hence $e_2'$ is epi. Finally, epis compose, hence $e_1';e_2;v_2;v_1=e_2';e_1;v_1$ is epi. Thus, $e_1';e_2$ satisfies the conditions for $e$ in the definition of simple reflection squares.

\item Consider the right diagram of \eqref{eq:reflection square} in case $v$ is an isomorphism. 
Let \tagone $h'=id_{P_2}$ and \tagtwo $g'= v^{-1};a_2$: they make the outer diagram commute, thus \tagthree $g' = a_1; v'; h'$ because the inner one is a reflection square. It follows that $v;a_1;v' \stackrel{\tagone}=  v;a_1;v';h' \stackrel{\tagthree}= v;g' \stackrel{\tagtwo}= v;v^{-1};a_2 = a_2$, hence the diagram commutes.

\item Consider any pair of arrows $g',h'$ making the outer diagram commute:

\[\tikzscale{\begin{tikzpicture}[auto,on grid,node distance = 1.5cm,baseline=(a1)]
\node (R1) {$R_1$};
\node (P1) [below=of R1] {$P_1$};
\node (R2) [right=of R1] {$R_2$};
\node (P2) [below=of R2] {$P_2$};
\node (R3) [right=of R2] {$R_3$};
\node (P3) [below=of R3] {$P_3$};

\path
  (R1) edge [->] node (a1) [auto] {$a_1$} (P1)
  (R2) edge [->] node [auto] {$a_2$} (P2)
       edge [->] node [above] {$v_1$} (R1)
  (P1) edge [->] node [auto] {$v_1'$} (P2)
  (R3) edge [->] node [auto] {$a_3$} (P3)
       edge [->] node [above] {$v_2$} (R2)
  (P2) edge [->] node [auto] {$v_2'$} (P3);

\node (G) [below left=of P1] {$G'$};
\path
  (R1) edge [->,bend right=25] node[left] {$g'$} (G)
  (P3) edge [->,bend left=20] node[auto] {$h'$} (G);
  
\end{tikzpicture}}\]
Since the right one is a reflection square we have $v_1;g'=a_2;v_2';h'$; then since the left one is also a reflection square we have $g'=a_1;v'_1;v_2';h'$, as required.

\item The proof can be found \href{https://math.stackexchange.com/questions/5129159/does-the-existence-of-a-commuting-diagonal-in-a-pushout-imply-that-one-leg-is-ep}{here}.
\end{enumerate}
\end{proofE}
%
%
\begin{example}\label{ex:reflection square}
To see that the first diagram shown in \cref{ex:square} (left hand side of \eqref{eq:ex-squares}) is a reflection square, note that in the following (left hand) diagram, where $g,h$ form a pushout for $v,a_2$, we have $g=a_1;v';h$ and \cref{prop:reflection square}.\ref{reflection-pushout} applies. It is not a \emph{simple} reflection square because the equaliser $e$ of $v;a_1;v'$ and $a_2$ maps a single-node discrete graph to node $z$ of $R_2$, hence $e;v$ is not epi.
\vspace*{-2mm}\[\tikzscale{\begin{tikzpicture}[on grid,auto,node distance = 1.5cm,baseline=(a)]
\node[graph] (R1) {\twonode x z};
\node[left=0 of R1.west,inner sep=1] {$R_1$};
\node[graph,below=of R1] (P1) {\onenode z};
\node[left=0 of P1.west,inner sep=1] {$P_1$};
\node[graph,right=2.5 of R1] (R2) {\threenode x y z}; 
\node[right=0 of R2.east,inner sep=1] {$R_2$};
\node[graph,below=of R2] (P2) {\twonode y z};
\node[right=0 of P2.east,inner sep=1] {$P_2$};
  
\path
   (R1) edge[->] node[left] (a1) {$a_1$} node[right] {$\mapping{x&z}$} (P1)
   (R2) edge[->] node[left] {$a_2$} node[right] {$\mapping{x&y}$} (P2)
        edge[->] node[above] {$v$} node[below,inner sep=0] {$\mapping{y&z}$} (R1)
   (P1) edge[->] node {$v'$} (P2);

\node[graph,below left=1 and 2.5 of P1] (PO) {$\onenode z$};

\path (R1) edge[->,bend right=25] node[left,pos=.6] {$g$} node[right,pos=.6] {$\mapping{x&z}$} (PO)
      (P2) edge[->,bend left=10,inner sep=2] node[above,pos=.3] {$h$} node[below,pos=.3] {$\mapping{y&z}$} (PO); 
\end{tikzpicture}}
\qquad\qquad
\tikzscale{\begin{tikzpicture}[on grid,auto,node distance = 1.5cm,baseline=(a)]
\node[graph] (R1) {\onenode x};
\node[left=0 of R1.west,inner sep=1] {$R_1$};
\node[graph,below=of R1] (P1) {\oneloopleft x a};
\node[left=0 of P1.west,inner sep=1] {$P_1$};
\node[graph,right=3 of R1] (R2) {\twonode x y}; 
\node[below left=0 of R2.south west,inner sep=-2] {$R_2$};
\node[graph,below=of R2] (P2) {\looponeedge x a a y};
\node[right=0 of P2.east,inner sep=1] {$P_2$};
  
\path
   (R1) edge[->] node (a) {$a_1$} (P1)
   (R2) edge[->] node {$a_2$} (P2)
        edge[->] node[above] {$v$} node[below,inner sep=0] {$\mapping{y&x}$} (R1)
   (P1) edge[->] node {$v'$} (P2);

\node[graph,right=2 of R2] (E) {\onenode x};

\path (E) edge[->] node[above] {$e$} (R2); 
\end{tikzpicture}}\]
The second diagram of \cref{ex:square} (centre of \eqref{eq:ex-squares}), repeated on the right hand above, is a \emph{simple} reflection square: the equalizer $e$ of $v;a_1;v'$ and $a_2$ is such that $e;v$ is epi.\qed 
\end{example}

\noindent
The following proposition states some useful facts regarding preservation squares.
\begin{propositionE}[preservation squares]\label{prop:preservation square}~
\begin{enumerate}[nosep]
\item\label{preservation square any} In the left hand diagram of \eqref{eq:preservation square}, the diagram is a preservation square if and only if $v'$ is a split mono with a retract $r'$ (meaning $v';r'=\id_{P_1}$) such that the upper pentagon in \eqref{eq:preservation square any} commutes (meaning $a_2;r'=v;a_1$).

\vspace*{-5mm}\begin{tikzequation}\label{eq:preservation square any}
\tikzscale{\begin{tikzpicture}[on grid,node distance = 1.5cm,baseline=(a1)]
\node (R1) {$R_1$};
\node (P1) [below=2 of R1] {$P_1$};
\node (R2) [right=3 of R1] {$R_2$};
\node (P2) [below=2 of R2] {$P_2$};

\path
  (R1) edge [->] node [left] (a1) {$a_1$} (P1)
  (R2) edge [->] node [auto] {$a_2$} (P2)
       edge [->] node [above] (v) {$v$} (R1)
  (P1) edge [->] node [below] (v') {$v'$} (P2);
  
\node (S) [above right=1 and 1.5 of P1] {$P_1$};

\path (S) edge[<-] node[above left,pos=.4,inner sep=3] {$\id$} (P1)
          edge[<-] node[above right,pos=.4,inner sep=3] {$r'$} (P2);
          
\path (S) edge[white] node[black,pos=.4] {$\circlearrowright$} (v')
          edge[white] node[black] {$\circlearrowright$} (v);
\end{tikzpicture}}
\end{tikzequation}%

\item In the left hand diagram of \eqref{eq:preservation square}, if the bottom arrow $v'$ is iso then the diagram is a preservation square if and only if it commutes.

\item\label{preservation square compose} Preservation squares compose; i.e., in
\vspace*{-2mm}\begin{tikzequation}\label{eq:preservation squares compose}
\tikzscale{\begin{tikzpicture}[on grid,node distance = 1.5cm,baseline=(a1)]
\node (R1) {$R_1$};
\node (P1) [below=of R1] {$P_1$};
\node (R2) [right=of R1] {$R_2$};
\node (P2) [below=of R2] {$P_2$};
\node (R3) [right=of R2] {$R_3$};
\node (P3) [below=of R3] {$P_3$};

\path
  (R1) edge [->] node (a1) [auto] {$a_1$} (P1)
  (R2) edge [->] node [auto] {$a_2$} (P2)
       edge [->] node [above] {$v_1$} (R1)
  (P1) edge [->] node [auto] {$v_1'$} (P2)
  (R3) edge [->] node [auto] {$a_3$} (P3)
       edge [->] node [above] {$v_2$} (R2)
  (P2) edge [->] node [auto] {$v_2'$} (P3);
\end{tikzpicture}}
\end{tikzequation}%
if the left and right hand subdiagrams are preservation squares, then so is the outer subdiagram.
\end{enumerate}
\end{propositionE}

\begin{proofE}~
\begin{enumerate}[defsep]
\item For the \emph{only if} part, if the outer square of \eqref{eq:preservation square any} is a preservation square, then note that $g=a_1$ and $h=\id_{P_1}$ are a particular pair of arrows such that $g=a_1;h$. Therefore there is an $h'$ such that $h = \id_{P_1}=v';h'$, which means that $v'$ is split mono and $h'$ is its retract, and $v;g=a_2;h'$, hence $v;a_1 = v;g = a_2; h'$, as required. 
For the \emph{if} part, assume that $r'$ is the retract of $v'$ and $a_2;r' = v; a_1$. Then for any pair $g,h$ such that $g = a_1;h$ define $h'=r';h$; then $v';h'=v';r';h=h$ and $v;g=v;a_1;h=a_2;r';h=a_2;h'$ as required.

\item  If $v'$ is iso in the left hand diagram of \eqref{eq:preservation square}, then it is split mono with retract $r'=v'^{-1}$. If the diagram commutes, namely $a_2=v;a_1;v'$, then $a_2;r'=v_1;a_1;v';r' = v_1;a_1$, thus it is a preservation square by the previous point. Vice versa, $a_2;r'=v_1;a_1$ implies $a_2=a_2;r';v'=v_1;a_2;v'$, proving commutativity of the original square.

\item We exploit the characterization of point 1. If both subdiagrams are preservation squares  and the split monos $v_i'$ have retracts $r_i'$ for $i=1,2$, then $v_1';v_2'$ is also split mono, with retract $r_2';r_1'$. By applying $v_i;a_i=a_{i+1};r_i'$ for $i=1,2$, it follows that $v_2;v_1;a_1=v_2;a_2;r_1'=a_3;r_2';r_1'$.
\end{enumerate}
\end{proofE}
\begin{example}\label{ex:preservation square}
To see that the diagram shown in \cref{ex:square} (right hand side of \eqref{eq:ex-squares}, repeated below) is a preservation square, note that $r'$ is a retract of $v'$ and $v;a_1=a_2;r'$; hence \cref{prop:preservation square}.\ref{preservation square any} applies.

\vspace*{-3mm}\[\tikzscale{\begin{tikzpicture}[on grid,auto,node distance = 1.5cm]
\node[graph] (R1) {\nodeedge z x a y};
\node[left=0 of R1.west,inner sep=1] {$R_1$};
\node[graph,below=2 of R1] (P1) {\oneedge x a y};
\node[left=0 of P1.west,inner sep=1] {$P_1$};
\node[graph,right=5 of R1] (R2) {\oneedge x a y}; 
\node[right=0 of R2.east,inner sep=1] {$R_2$};
\node[graph,below=2 of R2] (P2) {\spangraph z a x a y};
\node[right=0 of P2.east,inner sep=1] {$P_2$};
  
\path
   (R1) edge[->] node[left] {$a_1$} node[right] {$\mapping{z&x}$} (P1)
   (R2) edge[->] node[right] {$a_2$} node[left] {$\mapping{y&z}$} (P2)
        edge[->] node[above] {$v$} (R1)
   (P1) edge[->] node[below] {$v'$} (P2);

\node[graph,above right=1 and 2.5 of P1] (R) {\oneedge x a y};

\path
   (P1) edge[->] node[above] {$\id$} (R)
   (P2) edge[->] node[above] {$r'$} node[pos=.8,below] {$\mapping{z&y}$} (R);
\end{tikzpicture}}\]\raisedqed[14]
\end{example}

\noindent
We exploit the results just introduced to show that reflective and preservative morphisms compose.

\begin{propositionE}\label{prop:reflective-preservative composition}
Let $c\in\AC R$ and $c_i\in\AC{R_i}$ for $i=1,2,3$ and let $m_i\of c_i\to c_{i+1}$ be condition morphisms for $i=1,2$.
\begin{enumerate}[nosep]
\item $\id_c$ is a reflective condition  morphism, and if $m_1$ and $m_2$ are reflective then so is $m_1;m_2$.
\item $\id_c$ is a preservative condition  morphism, and if $m_1$ and $m_2$ are preservative then so is $m_1;m_2$.
\end{enumerate}
\end{propositionE}
\begin{proofE}~
\begin{enumerate}[defsep]
\item Morphism $\id_c$ (see \eqref{eq:id condition morphism})
obviously fulfills the requirement of reflectiveness, noting in particular that $a,\id,\id,a$ is a reflection square. Now consider morphism $m_1;m_2$ as defined in \eqref{eq:condition morphism composition}, and let us proceed by induction on $\depth(c_1) + \depth(c_2) + \depth(c_3)$. Since by  assumption relation $I_1$ is a function from $[1,w_2]$ to $[1,w_1]$ and $I_2$ is a function from $[1,w_3]$ to $[1,w_2]$, then the composed relation  $I_1;I_2$ is a function from $[1,w_3]$ to $[1,w_1]$. This implies that for each $(i,k) \in I_1;I_2$ there is a unique $j\in [1,w_2]$ such that $(j,k) \in I_2$ (and $(i,j) \in I_1$), and this, together with $|N^1_{ij}| = |N^2_{jk}| = 1$ which holds by assumption, ensures $|N_{ik}| = 1$. The only morphism of $N_{ik}$ is the composition of two morphisms which are reflective by assumption and relating three conditions of total depth smaller than $\depth(c_1) + \depth(c_2) + \depth(c_3)$,  thus it is reflective by induction hypothesis. Finally, the induced square is the composition of two reflection squares and thus is a reflection square by  \cref{prop:reflection square} \ref{reflection squares compose}.

\item The proof is analogous to that of point 1. 
\end{enumerate}
\end{proofE}
\begin{textAtEnd}
We conclude this section with the proof of the earlier \cref{prop:reflective beats preservative} showing that, under mild assumptions, for each preservative morphism between two conditions there is a reflective morphism in the opposite direction that is a retract of it.

\preservativeMorphismInversion*

\begin{proof}[Proof of \cref{prop:reflective beats preservative}]
By induction on $\depth(c)+\depth(c')$.
\begin{description}[defsep]
\item[Base case.] If $\depth(c)+\depth(c')=0$, then $m=(v,\varnothing,\varnothing)$ where $v: R^{c'} \to R^c$  is split mono; hence there is a retract $r\of R^{c'}\to R^c$ such that $v;r=\id_{R^c}$. Let $m'=(r,\varnothing,\varnothing)$. Clearly $m'$ is a reflective morphism from $c'$ to $c$ and $m;m'=\id_c$.
 
\item[Induction step.]  Assume the property holds up to $\depth(c)+\depth(c')$, and let $c=(R,(a_1,b_1)\ccdots (a_w,b_{w}))$, $c'=(R',(a'_1,b'_1)\ccdots (a'_{w'}, b'_{w'}))$ and  $m=(v,\Nrel,N)$. Since $m$ is preservative,  $\Nrel$ is a function from $[1,w]$ to $[1,w']$ and for each $(i,j) \in \Nrel$, $N_{ij} \isdef \setof{m_i: b'_j \to b_i}$ is a singleton.

\smallskip
For all $i\in[1,w]$, because $v$ is a split mono with retract $r$ and $m_i: b'_j \to b_i$ is preservative, the diagram on the left exists, where the top triangle commutes and $v;a_i=a'_j;r_{i}$:
\[\tikzscale{\begin{tikzpicture}[on grid,node distance = 1.5cm,baseline=(a1)]
\node (R1) {$R^c$};
\node (P1) [below=2 of R1] {$P_i$};
\node (R2) [right=3 of R1] {$R^{c'}$};
\node (P2) [below=2 of R2] {$P'_j$};

\path
  (R1) edge [->] node (a1) [left] {$a_i$} (P1)
  (R2) edge [->] node [auto] {$a'_j$} (P2)
       edge [->] node [below] {$v$} (R1)
  (P1) edge [->] node [below] (v') {$v_{m_{i}}$} (P2);

\node (S') [above right=1 and 1.5 of P1] {$P^p_j$};  

\path (S') edge[<-] node[above left,pos=.4,inner sep=2] {$\id$} (P1)
          edge[<-] node[above right,pos=.4,inner sep=2] {$r_{i}$} (P2);

\node (S) [above right=1 and 1.5 of R1] {$R^{c'}$};  

\path (S) edge[<-] node[above left,pos=.4,inner sep=3] {$r$} (R1)
          edge[<-] node[above right,pos=.4,inner sep=3] {$\id$} (R2);
\end{tikzpicture}}
\qquad\qquad
\tikzscale{\begin{tikzpicture}[on grid,auto,node distance = 1.5cm,baseline=(a1)]
\node (R1) {$R^{c'}$};
\node (P1) [below=2 of R1] {$P'_j$};
\node (R2) [left=2 of R1] {$R^{c'}$};
\node (P2) [below=2 of R2] {$P^i$};

\path
  (R1) edge [->] node [right] {$a_i$} (P1)
  (R2) edge [->] node [left] {$a'_j$} (P2)
  (R1) edge [->] node {$r$} (R2)
  (P2) edge [->] node [below] {$r_{i}$} (P1);

\node (S) [right=2 of R1] {$R^c$};
\path (S) edge [->] node {$v$} (R1);
\end{tikzpicture}}
\]
Therefore we have $v;a_i=a'_j;r_{i}=v;r;a'_j;r_{i}$, meaning that $v$ equalizes the right hand diagram above. Since, moreover, $v;r=\id$ is epi, the right hand diagram is a (simple) reflection square.

\smallskip
Note that $m_{i}\of b'_j\to b_i$ is a preservative morphism with top arrow $v_{m_i}$ a split mono. Since $\depth(b'_j)+\depth(b_i)<\depth(c)+\depth(c')$, by the induction hypothesis there is a reflective morphism $m'_{i}\of b_i\to b'_j$ such that $m_i;m'_i=\id_{b'_j}$. 
 Hence $m'=(r,\Nrel^{op},N')$ with $N'=\setof{N'_{ji} = \setof{m'_i}\mid (j,i) \in \Nrel^{op}}$ is a reflective condition morphism from $c'$ to $c$; and moreover, $m;m'=\id_c$.

\end{description}
\end{proof}
\end{textAtEnd}

\section{Categorical properties}
\label{sec:categories}

In this section, we frame the concepts and results introduced in the previous ones in a categorical setting, presenting several categories of  conditions and functors relating them. This allows us to show the functoriality of the operations on conditions introduced in \cref{sec:conditions}, with a few exceptions, and to characterize some of them as universal constructions. 

As mentioned earlier, in the literature, morphisms among  conditions are usually based on the semantic concept of entailment. Entailment (see \cref{def:entailment}) establishes a preorder on conditions over the same root, and it can be generalized to entailment of conditions over different roots related by an arrow in two different ways.

\begin{definition}[categories based on entailment]
    \label{def:categories entailment}
Given conditions $c_i \in \AC{R_i}$ for $i = 1,2$ and an arrow $f\of R_1\to R_2$ of $\cat{C}$, define $\leq^f$ as 
\[ c_1\leq^f c_2 \text{ iff for all } g\of R_2\to G: f;g\sat c_1 \text{ implies } g\sat c_2 \]
Category $\NCentailsup$ has conditions (over arbitrary roots) as objects. An arrow from $c_1$ to $c_2$ is a triple $(c_1, f\of R_{c_1} \to R_{c_2}, c_2)$ such that $c_1 \leq^f c_2$. Identities and 
composition are defined in the expected way.

\smallskip
Dually, for $c_i \in \AC{R_i}$ for $i = 1,2$ and an arrow $f\of R_2\to R_1$, define $\leq_f$ as
\[ c_1\leq_f c_2 \text{ iff for all } g\of R_1\to G: g\sat c_1 \text{ implies } f;g\sat c_2 \]
Category  $\NCentailsdown$ has conditions as objects; an arrow from $c_1$ to $c_2$ is a triple $(c_1, f\of R_{c_2} \to R_{c_1}, c_2)$ such that $c_1 \leq_f c_2$. 
Composition is defined as $(c_1, f, c_2); (c_2, g, c_3) = (c_1, g;f, c_3)$.\qed
\end{definition}
\noindent
The entailment preorders among conditions over the same root of \cref{def:entailment} are subcategories of both $\NCentailsup$ and $\NCentailsdown$.

\begin{fact}[entailment preorder]
Let $\NCentailsup(R)$ be the subcategory of $\NCentailsup$ containing as objects the conditions over $R$ and as arrows only triples of the form $(c_1, id_R, c_2)$.  Let $\NCentailsdown(R)$ be defined similarly as a subcategory of $\NCentailsdown$. Then $\NCentailsup(R) = \NCentailsdown(R)$, and both coincide with the preorder of semantic entailment of 
\cref{def:entailment}, i.e.~$(c_1, id_R, c_2)$ iff $c_1 \entails c_2$.\qed
\end{fact}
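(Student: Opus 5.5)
The statement has three parts. First, $\NCentailsup(R)$ and $\NCentailsdown(R)$ have the same objects and the same arrows. Second, each is a preorder. Third, that preorder is semantic entailment. The plan is simply to unfold \cref{def:categories entailment} at $f=\id_R$ and compare with \cref{def:entailment}.

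For objects, both subcategories contain by definition exactly the conditions in $\AC R$, so they coincide.

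For arrows, a triple $(c_1,\id_R,c_2)$ belongs to $\NCentailsup(R)$ iff $c_1\leq^{\id_R}c_2$. Since $\id_R;g=g$, this means that for all $g\of R\to G$, $g\sat c_1$ implies $g\sat c_2$, which is literally $c_1\entails c_2$. Symmetrically, $(c_1,\id_R,c_2)$ belongs to $\NCentailsdown(R)$ iff $c_1\leq_{\id_R}c_2$, i.e.\ for all $g\of R\to G$, $g\sat c_1$ implies $\id_R;g\sat c_2$. Again this is $c_1\entails c_2$. Hence the two hom-sets agree and both are characterised by entailment.

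To confirm these are genuine subcategories, I check closure under identities and composition. The identity on $c$ is $(c,\id_R,c)$, which is an arrow because $c\entails c$. Composition in $\NCentailsup$ composes the arrows as $\id_R;\id_R$, and in $\NCentailsdown$ as $g;f$ with $f=g=\id_R$; either way the result is $\id_R$, so the composite stays in the subcategory. This matches transitivity of $\entails$, so the composition laws also agree.

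For the preorder claim, note that an arrow is a triple determined by its endpoints once the middle component is fixed to $\id_R$. So there is at most one arrow between any two objects, and each category is a preorder. It coincides with $(\AC R,\entails)$.

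There is no real obstacle in this argument. The only point needing care is the convention that arrows of $\NCentailsdown$ go $f\of R_{c_2}\to R_{c_1}$, which is harmless when $f=\id_R$.
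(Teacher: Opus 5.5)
Your proposal is correct and is essentially the paper's own argument: the paper states this as a fact without a written proof, and the intended justification is exactly your unfolding of \cref{def:categories entailment} at $f=\id_R$ (using $\id_R;g=g$), compared against \cref{def:entailment}. Your additional checks — closure under identities and composition, and at most one arrow between any two objects — are routine and fine.
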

\noindent
We introduce now the categories of conditions based on the structural morphism proposed in \cref{sec:morphisms}. 

\begin{definition}[categories of structural morphisms]
    \label{def:categories structural}
The following categories are defined as having  conditions over arbitrary roots as objects, and 
\begin{itemize}[nosep]
\item $\NCplain$ has as arrows condition morphisms as defined in \cref{def:morphism};
\item $\NCrefl$ and $\NCpres$, subcategories of $\NCplain$,  have as arrows reflective and preservative morphisms,  as defined in \cref{def:reflective-preservative}.
\end{itemize}
Identities and composition are defined as in \cref{def:morphism composition}.

Given an object $R$, $\NCplain(R)$ is the subcategory of $\NCplain$ where objects are conditions over $R$, and arrows are morphisms $m$ of which $v_m=\id_R$. $\NCrefl(R)$ and $\NCpres(R)$ are defined similarly.\qed
\end{definition}

\noindent
The well-definedness of the three categories just introduced follows from \cref{prop:morphism composition,prop:reflective-preservative composition}.

\paragraph{Functors.}

The various categories of conditions are related by various, mostly obvious, functors, depicted in \cref{fig:functors}.

\begin{propositionE}[relating categories of conditions]
    \label{prop:functors}
The followings are identity-on-object functors:
\begin{enumerate}[nosep]
\item  $\cR\of {\NCrefl}^{op} \to \NCentailsup$, contravariant, mapping $(v,I,N)\of c_1 \to c_2$ to $(c_2,v,c_1)$;
\item $\cP\of \NCpres \to \NCentailsdown$, mapping $(v,I,N)\of c_1 \to c_2$ to $(c_1,v,c_2)$;
\item $\cR_R\of {\NCrefl(R)}^{op} \to \NCentailsup(R)$, contravariant, restriction of $\cR$ to  $\NCrefl(R)$;
\item $\cP_R \of \NCpres(R) \to \NCentailsdown(R) [= \NCentailsup(R)]$, restriction of $\cP$ to $\NCrefl(R)$.
\end{enumerate}
\end{propositionE}
\begin{proofE}
We show that the first two are well-defined, which implies that also the last two are:
\begin{enumerate}[nosep]
\item
Given arrow $(v,I,N)\of c_1 \to c_2$ of $\NCrefl$, $(c_2,v,c_1)$ is an arrow of $\NCentailsup$ because by \cref{prop:reflective} $v;g\sat c_2$ implies $g \sat c_1$ for all $g\of R_1\to G$.
\item Given arrow $(v,I,N)\of c_1 \to c_2$ of $\NCpres$, $(c_1,v,c_2)$ is an arrow of $\NCentailsdown$ because by \cref{prop:preservative}  $g\sat c_1$ implies $v;g \sat c_2$ for all $g\of R_1\to G$.
\end{enumerate}
\end{proofE}
\noindent
\Cref{prop:reflective beats preservative} shows that for each preservative morphism $m\of c_1 \to c_2$ having a split-mono root arrow, there is a reflective morphism $m': c_2 \to c_1$ such that $m;m' = id_{c_1}$. However
the construction involves the choice of a retract at each level of the morphism, thus the result is not uniquely determined. This implies that even restricting to conditions over the same objects with the identity as root arrow (which is split-mono) we don't obtain a proper contravariant functor from
$\NCpres(R)$ to $\NCrefl(R)$. There are two possible directions to model this situation formally: resorting to a suitable coherent choice of retractions, making the construction deterministic and ensuring functoriality, or using a weaker notion of functor that allows to map an arrow to a set of arrows. We leave this as a topic of future work.  





\begin{figure}[t]
\centering
\begin{tikzpicture}[on grid,node distance=1.5 and 1.5]
\newlength{\dx}
\setlength{\dx}{1.5cm}

\tikzset{
  inclusionover/.style={
   line width=1.2pt,
   double distance=0.8pt,
   double=gray!50,
   draw=white,
   -{Stealth[color=gray!50,length=2mm]},        
}}

\tikzset{
  inclusion/.style={
    draw=gray!50,
    line width=0.8pt,
     -{Stealth[length=2mm]},        
  }
}

\node (NC-entailsup) {$\NCentailsup$};
\node[right= of NC-entailsup, xshift=4\dx] (NC-refl) {$\NCrefl$};
\node [below right=1 and 2.5 of NC-entailsup](NC-entailsdown) {$\NCentailsdown$};
\node [below left=1 and 1.5 of NC-refl](NC-pres) {$\NCpres$};
\node[right=4 of NC-pres] (NC-plain) {$\NCplain$};
\node [below=2.5 of NC-entailsup](NC-entailsupR) {$\NCentailsup(R)$};
\node [below right=1.5 and 2.5 of NC-entailsupR)](NC-entailsupS) {$\NCentailsup(S)$};
\node [below=2.5 of NC-refl](NC-reflR) {$\NCrefl(R)$};
\node [below right=of NC-reflR](NC-reflS) {$\NCrefl(S)$};
\node [below=2.5 of NC-pres](NC-presR) {$\NCpres(R)$};
\node [below right=of NC-presR](NC-presS) {$\NCpres(S)$};
 

\path
  (NC-refl) edge[->] node[above] {$\scriptstyle{\cR^{op}}$} (NC-entailsup)
  (NC-pres) edge[->] node[above] {$\scriptstyle{\cP}$} (NC-entailsdown)
  (NC-reflR) edge[->] node[above] {$\scriptstyle{\cR_R^{op}}$} (NC-entailsupR)
  (NC-reflS) edge[->] node[pos=0.7, above] {$\scriptstyle{\cR_S^{op}}$}(NC-entailsupS)
  (NC-presR) edge[->] node[above,pos=.3] {$\scriptstyle{\cP_R}$} (NC-entailsupR)
  (NC-presS) edge[->] node[above] {$\scriptstyle{\cP_S}$} (NC-entailsupS)
    (NC-entailsupR) edge[ ->] node[above,xshift=4pt] (ar) {$\scriptstyle{\ushift{\_}u}$} (NC-entailsupS)
    (NC-entailsupS.west) edge[bend left=45, ->] node[below,xshift=-3pt,pos=.6] (al) {$\scriptstyle{\dshift{\_}u}$} (NC-entailsupR)
    (NC-reflR) edge[->] node[below,pos=0.3,xshift=-3pt] {$\scriptstyle{\ushift{\_}u}$} (NC-reflS)
    (NC-presR) edge[over,->] node[below,pos=0.3,xshift=-3pt] {$\scriptstyle{\ushift{\_}u}$} (NC-presS);

\path
  (NC-entailsupR) edge[inclusionover] (NC-entailsup)
  (NC-entailsupS) edge[inclusion] (NC-entailsup)
  (NC-entailsupS) edge[inclusionover] (NC-entailsdown)
  (NC-entailsupR) edge[inclusion] (NC-entailsdown)
  
   (NC-refl) edge[inclusionover] (NC-plain)
   (NC-reflR) edge[inclusionover] (NC-refl)
   (NC-reflS) edge[inclusionover] (NC-refl)
   (NC-presR) edge[inclusionover] (NC-pres)
   (NC-presS) edge[inclusionover] (NC-pres)
   (NC-pres) edge[inclusionover] (NC-plain);

\path (al) edge[draw=none] node[rotate=225] {$\vdash$} (ar);
\end{tikzpicture}
\caption{Overview of the functors of \cref{prop:functors} and \cref{prop:functoriality} (5-7). Gray arrows indicate inclusions of subcategories. The upshift and downshift functors are relative to an arrow $u\of S \to R$ of $\cat{C}$}.
\label{fig:functors}
\end{figure}

\subsection{Functoriality and universality of operations on conditions}

The categories of conditions just introduced allow us to analyse some categorical properties of the constructions on  conditions introduced in \cref{sec:conditions}. The next proposition summarizes the main universal and functoriality properties.  

\begin{propositionE}[functoriality of operations]
    \label{prop:functoriality}~
\begin{enumerate}[nosep]
\item Condition $\bot_R \isdef (R,\epsilon)$ is initial in $\NCentailsup(R)$ and in $\NCpres(R)$,  and terminal in $\NCrefl(R)$. 

\item The inverse operator $\inv{}$ induces a contravariant functor $\inv{_R} \of {\NCentailsup(R)}^{op} \to \NCentailsup(R)$.

\item The operator $\inv{}$ also induces a contravariant endofunctor on $\NCplain(R)$, which restricts to corresponding contravariant endofunctors on the subcategories $\NCrefl(R)$ and $\NCpres(R)$.

\item $\inv{_R}$ commutes with functors $\cR_R$ and $\cP_R$, i.e., $\inv{_R};\cR_R = \cR_R; \inv{_R}\of {\NCrefl(R)}^{op} \to \NCentailsup(R)$, and  $\inv{_R};\cP_R = \cP_R; \inv{_R}\of {\NCpres(R)}^{op} \to \NCentailsup(R)$.

\item For $u\of S \to R$, the upshift operator $\ushift{\_}{u}\of \NC{R}\to \NC{S}$ of \cref{def:upshift} is always functorial. More precisely, if $c_1, c_2\in \NC{R}$ and $k\of c_1 \to c_2$ is an arrow in $\NCentailsup(R), \NCplain(R), \NCrefl(R)$ or $\NCpres(R)$, then  $\ushift{k}{u}\of \ushift{c_1}{u}\to \ushift{c_2}{u}$ is an arrow in $\NCentailsup(S), \NCplain(S), \NCrefl(S)$ or $\NCpres(S)$, respectively.

\item\label{downshift-not-functorial} For $v: R \to S$, the downshift operator $\dshift{\_}{v}:  \NC{R} \to \NC{S}$ of \cref{def:downshift} is functorial on $\NCentailsup(R)$, but it is not functorial on $\NCplain(R)$, $\NCrefl(R)$ or $\NCpres(R)$.
\item For $u\of R \to S$, functor $\ushift{\_}{u}: \NCentailsup(S) \to \NCentailsup(R)$  is left adjoint to $\dshift{\_}{u}: \NCentailsup(R) \to \NCentailsup(S)$, i.e., for all $c\in \NCentailsup(R)$ and $b\in \NCentailsup(S)$ it holds
\[b \entails \dshift{c}{u} \iff  \ushift{b}{u} \entails c\]

\item  The join operator (concatenation of branches) ${\join} \of \NC{R}\times \NC{R} \to \NC{R}$ of \cref{def:join} is a categorical coproduct in $\NCentailsup(R)$, both product and coproduct in $\NCplain(R)$, product in $\NCrefl(R)$, and coproduct in $\NCpres(R)$. It is functorial in both arguments in all these categories.

\item The meet operator ${\meet} \of \NC{R}\times \NC{R} \to \NC{R}$  of \cref{def:meet} 
is a product in $\NCentailsup(R)$, but it is not functorial in the other categories, due to non-functoriality of downshift (point \ref{downshift-not-functorial} above).
\end{enumerate}
\end{propositionE}

\begin{proofE}~
\begin{enumerate}
\item Initiality of $\bot_R \in \NCentailsup(R)$ follows by \cref{prop:bottom}, because $False \entails c$ for any condition $c$. 
    Furthermore, given $c \in \NC(R)$, the only morphism $(id_R, \varnothing, \varnothing)\of \bot_R \to c$ is preservative, as $\varnothing \subseteq [1,0] \times [1,w]$ is a function, while the same  $(id_R, \varnothing, \varnothing): c \to \bot_r$ is reflexive, as $\varnothing \subseteq  [1,w] \times [1,0]$ is surjective and injective. 

\item By \cref{prop:inverse}, \inv{} acts as negation, and $c_1 \entails c_2$ iff $\neg c_2 \entails\neg c_1$.
\item By \cref{def:inverse}, $\inv c = (R, (id_R, c))$. If $m\of c_1 \to c_2$ is a morphism in $\NCplain(R)$, then $\inv{m} \isdef (\id_R, \setof{(1,1)}, N = \setof{N^{11}=\setof{m}})$ is a morphism from $\inv{c_2}$ to $\inv{c_1}$ in $\NCplain(R)$. Notice also that the 1,1 square of $\inv{m}$ is both reflective and preservative, as it is made of four identities, thus if $m$ is in $\NCrefl(R)$ (resp.\ $\NCpres(R)$) then so is $\inv{m}$. 
\item Obvious.

\item Functoriality of upshift on $\NCentailsup(R)$ follows by the same property of the existential construction of~\cite{bchk:conditional-reactive-systems}: we present a proof based on our notation. Let $u\of S \to R$ and \tagone $c_1 \entails c_2\in \NCentailsup(R)$. If $g \entails \ushift{c_1}{u}$, by point 2 of \cref{prop:shiftSat} \tagtwo  there is some $g'$ such that $g = u;g'$ and  $g' \entails c_1$. Thus by \tagone $g' \entails c_2$, and by \tagtwo $g \entails \ushift{c_2}{u}$, showing that $\ushift{\_}{u}$ is functorial on $\NCentailsup(R)$.

\smallskip
Given $m = (id_R,I,N) \of c_1 \to c_2$ in $\NCplain(R)$, let $\ushift{m}{u} \isdef  (id_S,I,N)\of \ushift{c_1}{u} \to \ushift{c_2}{u}$. This is well defined because the number of branches is preserved by upshift, as well as the subconditions. Every square of $\ushift{m}{u}$ induced by the sub-morphisms in $N$ is obtained by composing the corresponding square of $m$ with (on the top) the square having $u, id_S, id_R, u$: it can be shown that such composition preserves the defining property of reflection and preservation squares, as required.

\item Functoriality of downshift is known, see e.g.~\cite{bchk:conditional-reactive-systems}: we present a proof based on our notation. If \tagone $c_1 \entails c_2$, then $g \entails \dshift{c_1}{v} \stackrel{\tagtwo}{\iff} v; g \entails c_1 \stackrel{\tagone}{\Rightarrow} v;g \entails c_2 \stackrel{\tagtwo}{\iff} g \entails \dshift{c_2}{v}$, where \tagtwo holds by point 1 of \cref{prop:shiftSat}, showing that $\dshift{\_}{v}$ is functorial on $\NCentailsup(R)$.

\smallskip\cref{ex:downshift-not-functorial} shows an example of a morphism $m\of c_1\to c_2$ in $\NCplain$ that does not have an image under $\dshift{\_}{v}$; i.e., for which no morphism from $\dshift{c_1}v$ to $\dshift{c_2}v$ exists.

\item This adjunction is already stated in~\cite{bchk:conditional-reactive-systems}, given the semantic equivalence of the shift and existential operators defined there with the downshift and upshift operators. We present a proof based on our notation. Given $u\of R\to S$, $c \in \NC{R}$, and $b\in \NC{S}$,  we have to show that the following are equivalent:  
\smallskip
\begin{description}[nosep]
\item[\tagone] For all $m:S \to M$, $m \sat b \implies m \sat \dshift{c}{u}$;
\item[\tagtwo] For all $g:R\to G$, $g\sat \ushift{b}{u} \implies g \sat c$.
\end{description}

\begin{description}[nosep]
    \item[$\tagone \Rightarrow \tagtwo$] Assume \tagone and let $g:R\to G$ be such that $g\sat \ushift{b}{u}$. By point 2 of \cref{prop:shiftSat} there is some $g'$ such that $g = u;g'$ and $g' \sat b$. Thus by \tagone $g' \sat \dshift{c}{u}$, and by point 1 of \cref{prop:shiftSat} $g = u;g' \sat c$, as required.
\item[$\tagtwo \Rightarrow \tagone$] Assume \tagtwo and let $m:S \to M$ be such that $m \sat b$. By point 2 of \cref{prop:shiftSat} $u;m \sat \ushift{b}{u}$, thus by \tagtwo $u;m \sat c$. Therefore by point 1 of \cref{prop:shiftSat} $m \sat \dshift{c}{u}$, as required.
\end{description}

\item  $\join$ is coproduct in $\NCentailsup$ by \cref{prop:join}.

\smallskip
Instead the universal properties of $\join$ in $\NCplain$ and subcategories are mainly due to the properties of $\cat{Rel}$, the category of sets and relations, where products and coproducts coincide.
Let $c_i=\tupof{R,p^i_1\ccdots p^i_{w_i}} \in \AC R$ with $p^i_j = (a^i_j, b^i_j)$ for $i=1,2$, and $c_1\sqcup c_2= (R,p^1_1\ccdots p^1_{w_1}\, p^2_1\ccdots p^2_{w_2})$. Define morphisms $\pi_i\of c_1\join c_2 \to c_i$ as $\pi_1 = (id_R, \{(j,j)\mid j \in [1, w_1]\}, \setof{N^{jj}=\setof{id_{b^1_j}}})$ and $\pi_2 = (id_R, \{(j+w_1,j)\mid j \in [1, w_2]\}, \setof{N^{(j+w_1)j}=\setof{id_{b^2_j}}})$.  For each pair of morphisms $g = (id_R,I_g, N_g): c \to c_1$,  $f = (id_R,I_f, N_f): c \to c_2$ there is a unique morphism $(g,f) \of  c \to c_1 \join c_2$ such that $(g,f); \pi_1 = g$ and $(g,f); \pi_2 = f$, defined as $(g,f) = (id_R, I_g\cup \setof{(i, j+ w_1)\mid (i,j) \in I_f},  N_g \cup \setof{N^{i, j+w_1}\isdef N^{i,j} \mid N^{i,j} \in N_f})$. This makes $(c_1\join c_2, \pi_1, \pi_2)$ the product of $c_1$ and $c_2$ in $\NCplain(R)$. Note also that $\pi_1,\pi_2$ are reflective, as the relations are inverse functions and all induced squares are trivially reflective; furthermore, if $g\of c \to c_1$ and $f\of c \to c_2$ are reflective, so is $(g,f)$, thus the same is a product in $\NCrefl$.

\smallskip
Similarly,  injections $in_i\of c_i \to c_1 \join c_2$ are obtained from $\pi_1$ and $\pi_2$ by taking the opposite of the relation $I$ . It can be shown that $(c_1\join c_2, in_1, in_2)$ is the coproduct of $c_1$ and $c_2$ in $\NCplain(R)$, and since injections and induced morphisms are preservative, also in $\NCpres(R)$. 

\item $\meet$  is product in $\NCentailsup(R)$ by \cref{prop:meet}.
\end{enumerate}
\end{proofE}

\begin{example}\label{ex:downshift-not-functorial}
To see that downshift is not functorial (\cref{prop:functoriality}.\ref{downshift-not-functorial}), consider the conditions on the left hand of \cref{fig:downshift-not-functorial}. The dashed arrows between $c_1$ and $c_2$ together establish a valid reflective morphism $m\of c_1\to c_2$ with the identity as top arrow, establishing $c_2\entails c_1$. However, between the downshifted versions $\dshift{c_1}v$ and $\dshift{c_2}v$ (where the arrow $v$ is the one shown in the figure), no morphism exists: on the third level, there is no arrow from $\inline{\backedgedoubleedge z b x c a y}$ to $\inline{\backedgedoubleedge z a x c b y}$.\qed
\end{example}

\begin{figure}\centering
\tikzscale{\begin{tikzpicture}[on grid]
  \node (c1) {$c_1$};
  \node[graph,below=.5 of c1] (a10) {\twonode x y};
  \node[graph,below=of a10] (a11) {\backedgenode z a x y};
  \node[graph,below=of a11] (a12) {\spangraph z a x b y};
  \node[below=.8 of a12] {$\exists z\st \la(x,z)\wedge \neg \lb(x,y)$};

  \path (a10) edge[->] (a11)
        (a11) edge[->] (a12);
  
  \node (c2) [right=3.5 of c1] {$c_2$};
  \node[graph,below=.5 of c2] (a20) {\twonode x y};
  \node[graph,below=of a20] (a21) {\oneedge x a y};
  \node[graph,below=of a21] (a22) {\spangraph z b x a y};
  \node[below=.8 of a22] {$\la(x,y) \wedge \neg\exists z\st \lb(x,z)$};

  \path (a20) edge[->] (a21)
        (a21) edge[->] (a22);

  \path (a10) edge[dashed,<-] (a20)
        (a21) edge[dashed,<-] node[below] {$\mapping{z&y}$} (a11)
        (a12) edge[dashed,<-] node[below] {$\mapping{y&z},\mapping{z&y}$} (a22);
        
  \node (c3)  [right=3 of c2] {$c_1\tdarr v$};
  \node[graph,below=.5 of c3] (a30) {\oneedge x c y};
  \node[graph,below=of a30] (a31) {\spangraph z a x c y};
  \node[graph,below=of a31] (a32) {\backedgedoubleedge z a x c b y};

  \path (a30) edge[->] (a31)
        (a31) edge[->] (a32);

  \path (a20) edge[dashed,->] node[above] {$v$} (a30);
  
  \node (c4) [right=3 of c3] {$c_2\tdarr v$};
  \node[graph,below=.5 of c4] (a40) {\oneedge x c y};
  \node[graph,below=of a40] (a41) {\doubleedge x c a y};
  \node[graph,below=of a41] (a42) {\backedgedoubleedge z b x c a y};

  \path (a40) edge[->] (a41)
        (a41) edge[->] (a42);

  \path (a30) edge[dashed,<-] (a40)
        (a41) edge[dashed,<-] node[below] {$\mapping{z&y}$} (a31);
\end{tikzpicture}}
\caption{Condition downshift is not functorial: there is a morphism from $c_1$ to $c_2$ but not from $\dshift{c_1}v$ to $\dshift{c_2} v$}.
\label{fig:downshift-not-functorial}
\end{figure}

\section{Conclusion, Related and Future work}
\label{sec:conclusion}

The main contributions of this paper can be summarized as follows:

\begin{itemize}[nosep]
\item 
After presenting (nested) conditions as in \cite{Rensink-FOL}, in \cref{sec:conditions} we introduce several operations on them showing the induced logical meaning on the corresponding FOL formulas. Such operators are shown to be complete.

\item In \cref{sec:morphisms} we propose an original notion of structural morphisms among conditions, which are shown to be composable and to have identities. We identify two subclasses of morphisms, the \emph{reflective} and the \emph{preservative} ones, which are shown to reflect and preserve satisfaction, respectively, and thus witness entailment between the formulas associated with the related conditions. We also show that, under mild assumptions, for each preservative morphism there is a reflective one in the opposite direction.

\item In \cref{sec:categories} we define several categories having conditions as objects and as arrows the various kinds of morphisms defined in \cref{sec:morphisms}. These are related with suitable  identity-on-object functors to the preorder of entailment among conditions. We also show some functoriality and universality properties of the constructions on conditions of \cref{sec:conditions} with respect to these categories. 
\end{itemize}
\paragraph{Discussion.}

The work reported here was primarily motivated by curiosity: given that entailment in the $\exists$-fragment of FOL (see \cref{sec:introduction}) can be fully characterised by structural morphisms between graphs, we try to push this idea by considering, not graphs, but nested conditions, which (over \cat{Graph}) are known to be expressively equivalent to full FOL. If successful, the resulting extended morphism would provide an understanding of \emph{why} one condition entails another; this can then potentially be manipulated and give rise to new forms of reasoning. One concrete case of this has serendipitously presented itself in the guise of \emph{Craig interpolants}, which we speculate (in the future work discussion below) might emerge from our structural morphisms. If one is interested in entailment checking per se, we do not think that searching for a morphism between two given conditions is likely to be superior to existing methods --- certainly not based on the precise form of morphism studied in this paper, which we have observed to be rather weak. (How weak precisely is again one of the questions we ask ourselves in the future work discussion.) 

One observation that has been made multiple times (see the PhD theses of Pennemann \cite{p:development-correct-gts} and Wallentin \cite{Wallentin-PhD}) is that the use of state-of-the-art satisfiability checkers such as \href{https://en.wikipedia.org/wiki/Z3_Theorem_Prover}{Z3} to the FOL translation of nested condition yields poorer performance than (academic) tools that work directly on the nested (graph) structure itself. To our knowledge, this phenomenon has not been studied in depth and so one should be careful in drawing conclusions; nevertheless, it suggests that the graph-based structure of nested conditions might actually have some performance advantages. The investigation in this paper can be seen as a contribution to the theoretical framework built on top of that graph-based structure.

\paragraph{Related work.}

We made precise along the paper the relationship between our morphisms and the preorder of entailment among conditions presented in~\cite{bchk:conditional-reactive-systems}. Unfortunately, space constraints do not allow us to discuss in detail the relationship with our previous work~\cite{DBLP:conf/birthday/Rensink025}. Anyway, we can explain at and abstract level why the morphisms proposed here are simpler and more general than those of~\cite{DBLP:conf/birthday/Rensink025}. In that paper we assumed that morphisms could relate only conditions over the same root, consistently with the standard notion of entailment (\cref{def:entailment}). This led to some complications, as in the inductive definition even if two conditions have the same root, two related subconditions (branch conditions) could have different roots, and these were reconciled with the use of so-called ``shifters''. In this paper, instead, we defined morphisms immediately between conditions over different roots (but related by an arrow) obtaining a simpler definition. We recovered the constraint on entailment by requiring the top arrow of a morphisms to be an identity, and we also generalized the standard notion of entailment for conditions having different roots (\cref{def:categories entailment}). 

As a matter of fact, in \cite{DBLP:conf/birthday/Rensink025} we also introduced a different kind of conditions, so-called \emph{span-based} ones, which were there posed as a variant that in principle allows more powerful reasoning. A natural question is whether the morphisms defined here for arrow-based conditions can be generalized to span-based ones. Likewise, the same question can be asked for other variants of nested conditions proposed in the literature, such as those of~\cite{Habel-FOL} and of~\cite{bchk:conditional-reactive-systems}.

\paragraph{Future work.}
The expressive power of the reflective morphisms we defined is quite limited, in the sense that they only explain a small fragment of entailment among formulas. A question that we intend to address in the future is whether there is any independent characterisation of the fragment of entailment that is explained by our condition morphisms and, connected to this, whether the proposed morphisms can be generalized to explain a larger fragment of entailment. 

Another topic we plan to explore, inspired by discussions with Barbara König, is related to the existence of \emph{Craig interpolants} for implications among formulas induced by conditions. Craig interpolants play an important role in improving abstraction-based reasoning (cf.\ \cite{Craig}), and being able to characterise them in the case of nested conditions could therefore strengthening the method studied in \cite{CEGAR-GT}. In particular, we are interested in the relationship between the existence of a morphism between two conditions and the existence of an interpolant for the corresponding implication. 

Finally, let us observe that in~\cite{bchk:conditional-reactive-systems,sksclo:coinductive-techniques-for-satisfiability} nested conditions are defined over an arbitrary category $\bC$, which allows to instantiate the framework beyond presheaf toposes: for example, to the category of graphs and injective morphisms, or to the category of left-linear cospans of an adhesive category. In some cases where pushouts do not (sufficiently) exist, their role can be taken over by (finite) \emph{sets of representative squares}. We intend to explore to what extent our assumptions on category $\bC$ can be relaxed in order to be able to apply our results to other, more general settings.

\paragraph{Acknowledgements.}
We would like to thank Filippo Bonchi for providing the original inspiration for this work, and together with Nicolas Behr and Barbara König for providing comments in earlier stages; and Nicolas and Barbara again for a series of inspiring brainstorm sessions on nested conditions.

\paragraph{Declaration on Generative AI.}
The authors have not employed any Generative AI tools.

\bibliography{arxiv-references}

@string{LNCS={LNCS}}

@book{MacLaneMoerdijk1992,
  author    = {Saunders Mac Lane and Ieke Moerdijk},
  title     = {Sheaves in Geometry and Logic: A First Introduction to Topos Theory},
  publisher = {Springer},
  year      = {1992},
  doi      = {10.1007/978-1-4612-0927-0}
}

@inproceedings{EhrigHKLRWC97,
  author    = {Hartmut Ehrig and
               Reiko Heckel and
               Martin Korff and
               Michael L{\"{o}}we and
               Leila Ribeiro and
               Annika Wagner and
               Andrea Corradini},
  title     = {Algebraic Approaches to Graph Transformation {II:} Single Pushout
               Approach and Comparison with Double Pushout Approach},
  booktitle = {Handbook of Graph Grammars and Computing by Graph Transformations,
               Vol. 1: Foundations},
  year      = {1997},
  pages = {247-312},
  crossref  = {handbook1},
  doi = {https://doi.org/10.1142/9789812384720_0004}
}

@proceedings{handbook1,
  editor    = {Grzegorz Rozenberg},
  title     = {Handbook of Graph Grammars and Computing by Graph Transformations,
               Volume 1: Foundations},
  year      = {1997},
  publisher = {World Scientific},
  isbn      = {9810228848},
  biburlNO    = {http://dblp.uni-trier.de/rec/bib/conf/gg/1997handbook},
  bibsource = {dblp computer science bibliography, http://dblp.org}
}

@article{Anjorin_Leblebici_Schürr_2016, 
  title ={20 Years of Triple Graph Grammars: A Roadmap for Future Research}, 
  volume ={73},
    DOI={10.14279/tuj.eceasst.73.1031}, 
  journal={Electronic Communications of the EASST},
  author={Anjorin, Anthony and Leblebici, Erhan and Schürr, Andy},
   year={2016}, 
   month={Apr.} 
   }

@inproceedings{DBLP:conf/birthday/Rensink025,
  author       = {Arend Rensink and
                  Andrea Corradini},
  editor       = {Nils Jansen and
                  Sebastian Junges and
                  Benjamin Lucien Kaminski and
                  Christoph Matheja and
                  Thomas Noll and
                  Tim Quatmann and
                  Mari{\"{e}}lle Stoelinga and
                  Matthias Volk},
  title        = {On Categories of Nested Conditions},
  extbooktitle    = {Principles of Verification: Cycling the Probabilistic Landscape -
                  Essays Dedicated to Joost-Pieter Katoen on the Occasion of His 60th
                  Birthday, Part {I}},
  booktitle    = {Principles of Verification: Cycling the Probabilistic Landscape, Part {I}},
  series       = LNCS,
  pages        = {393--418},
  publisher    = {Springer},
  year         = {2024},
  doi          = {10.1007/978-3-031-75783-9\_16},
  bibsource    = {dblp computer science bibliography, https://dblp.org}
}

@InProceedings{Rensink-FOL,
  author = 	 {Arend Rensink},
  editor       = {Hartmut Ehrig and
                  Gregor Engels and
                  Francesco Parisi{-}Presicce and
                  Grzegorz Rozenberg},
  title        = {Representing First-Order Logic Using Graphs},
  booktitle    = {Second International Conference on Graph Transformations (ICGT)},
  series       = LNCS,
  pages        = {319--335},
  publisher    = {Springer},
  year         = {2004},
  doi          = {10.1007/978-3-540-30203-2\_23},
}

@InProceedings{sksclo:coinductive-techniques-for-satisfiability,
  author = 	 {Lara Stoltenow and Barbara K{\"o}nig and
                  Sven Schneider and Andrea Corradini and Leen Lambers
                  and Fernando Orejas},
  editor       = {Rupak Majumdar and
                  Alexandra Silva},
  title        = {Coinductive Techniques for Checking Satisfiability of Generalized
                  Nested Conditions},
  booktitle    = {35th International Conference on Concurrency Theory (CONCUR)},
  series       = {LIPIcs},
  pages        = {1--39},
  extpublisher    = {Schloss Dagstuhl - Leibniz-Zentrum f{\"{u}}r Informatik},
  publisher    = {Leibniz-Zentrum f{\"{u}}r Informatik},
  year         = {2024},
  doi          = {10.4230/LIPICS.CONCUR.2024.39},
}

@PhdThesis{p:development-correct-gts,
  author = 	 {Karl-Heinz Pennemann},
  title = 	 {Development of Correct Graph Transformation Systems},
  school = 	 {Universit\"at Oldenburg},
  year = 	 2009,
  month =	 {May}
}

@Article{Habel-FOL,
  author = 	 {Habel, Annegret and Pennemann, Karl-Heinz},
  title        = {Correctness of high-level transformation systems relative to nested
                  conditions},
  journal      = {Math. Struct. Comput. Sci.},
  volume       = {19},
  number       = {2},
  pages        = {245--296},
  year         = {2009},
  doi          = {10.1017/S0960129508007202},
}

@article{NegativeAC,
	author = {Habel, Annegret and Heckel, Reiko and Taentzer, Gabriele},
	title = {Graph grammars with negative application conditions},
	journal = {Fundamenta Informaticae},
	issue_date = {December 1996},
	volume = {26},
	number = {3,4},
	month = dec,
	year = {1996},
	issn = {0169-2968},
	pages = {287--313},
  doi          = {10.3233/FI-1996-263404},
}

@Book{eept:fundamentals-agt,
  author =	 {Hartmut Ehrig and Karsten Ehrig and Ulrike Prange and Gabriele Taentzer},
  title = 	 {Fundamentals of Algebraic Graph Transformation},
  series       = {Monographs in Theoretical Computer Science},
  publisher    = {Springer},
  year         = {2006},
  doi          = {10.1007/3-540-31188-2},
}

@Article{ls:adhesive-journal,
  author =	 {Stephen Lack and Pawe{\l} Soboci\'{n}ski},
  title =	 {Adhesive and Quasiadhesive Categories},
  journal      = {{RAIRO} Theor. Informatics Appl.},
  volume       = {39},
  number       = {3},
  pages        = {511--545},
  year         = {2005},
  doi          = {10.1051/ITA:2005028},
}

@PhDThesis{Wallentin-PhD,
	Author="Lara Wallentin",
    Title="Verifying Graph Rewriting --- Coinductive Techniques for Nested Conditions",
    Year=2026,
    School = {University Duisburg-Essen, Germany},
}

@InProceedings{bchk:conditional-reactive-systems,
  author = 	 {{H.J.} Sander Bruggink and Rapha{\"e}l Cauderlier and
                  Mathias H{\"u}lsbusch and Barbara K{\"o}nig},
  title = 	 {Conditional Reactive Systems},
  booktitle =    {{IARCS} Annual Conference on Foundations of Software Technology and
                    Theoretical Computer Science (FSTTCS)},
  editor       = {Supratik Chakraborty and
                    Amit Kumar},
  series       = {LIPIcs},
  pages        = {191--203},
  extpublisher    = {Schloss Dagstuhl - Leibniz-Zentrum f{\"{u}}r Informatik},
  publisher    = {Leibniz-Zentrum f{\"{u}}r Informatik},
  year         = {2011},
  doi          = {10.4230/LIPICS.FSTTCS.2011.191},
  volume = 	 {13},
}

@phdthesis{p:correct-GTSs-thesis,
  author       = {Karl{-}Heinz Pennemann},
  title        = {Development of correct graph transformation systems},
  school       = {University of Oldenburg, Germany},
  year         = {2009},
  url          = {https://nbn-resolving.org/urn:nbn:de:gbv:715-oops-9483},
  urn          = {urn:nbn:de:gbv:715-oops-9483},
  bibsource    = {dblp computer science bibliography, https://dblp.org}
}

@inproceedings{DBLP:conf/stoc/ChandraM77,
  author       = {Ashok K. Chandra and
                  Philip M. Merlin},
  editor       = {John E. Hopcroft and
                  Emily P. Friedman and
                  Michael A. Harrison},
  title        = {Optimal Implementation of Conjunctive Queries in Relational Data Bases},
  booktitle    = {9th Annual {ACM} Symposium on Theory of Computing (STOC)},
  pages        = {77--90},
  publisher    = {{ACM}},
  year         = {1977},
  doi          = {10.1145/800105.803397},
  bibsource    = {dblp computer science bibliography, https://dblp.org}
}

@article{AzziCR19,
  author       = {Guilherme Grochau Azzi and
                  Andrea Corradini and
                  Leila Ribeiro},
  title        = {On the essence and initiality of conflicts in {M}-adhesive transformation
                  systems},
  journal      = {J. Log. Algebraic Methods Program.},
  volume       = {109},
  year         = {2019},
  doi          = {10.1016/J.JLAMP.2019.100482},
  bibsource    = {dblp computer science bibliography, https://dblp.org}
}

@book{DBLP:books/sp/HeckelT20,
  author       = {Reiko Heckel and
                  Gabriele Taentzer},
  title        = {Graph Transformation for Software Engineers - With Applications to
                  Model-Based Development and Domain-Specific Language Engineering},
  publisher    = {Springer},
  year         = {2020},
  doi          = {10.1007/978-3-030-43916-3},
  isbn         = {978-3-030-43915-6},
  bibsource    = {dblp computer science bibliography, https://dblp.org}
}

@misc{mathexchange,
  author="Daniel Schepler",
  Title="Does the existence of a commuting diagonal in a pushout imply that one leg is epi?",
  Howpublished="Answer on \href{https://math.stackexchange.com/questions/5129159/does-the-existence-of-a-commuting-diagonal-in-a-pushout-imply-that-one-leg-is-ep}{math.stackexchange.com}",
  year = {2026},
}

@inproceedings{Craig,
  author    = {Thomas A. Henzinger and
               Ranjit Jhala and
               Rupak Majumdar and
               Kenneth L. McMillan},
  title     = {Abstractions from proofs},
  booktitle = {31st Symposium on Principles
               of Programming Languages (POPL)},
  pages     = {232--244},
  year      = {2004},
  doi       = {10.1145/964001.964021},
  bibsource = {dblp computer science bibliography, https://dblp.org}
}

@inproceedings{CEGAR-GT,
  author       = {Barbara K{\"{o}}nig and
                  Arend Rensink and
                  Lara Stoltenow and
                  Fabian Urrigshardt},
  editor       = {J{\"{o}}rg Endrullis and
                  Matthias Tichy},
  title        = {Counterexample-Guided Abstraction Refinement for Generalized Graph
                  Transformation Systems},
  booktitle    = {18th International Conference on Graph Transformation (ICGT)},
  series       = LNCS,
  pages        = {157--176},
  publisher    = {Springer},
  year         = {2025},
  doi          = {10.1007/978-3-031-94706-3\_8},
  bibsource    = {dblp computer science bibliography, https://dblp.org}
}

@InProceedings{GCM-original,
  author={Arend Rensink and Andrea Corradini},
  title={Structural Morphism for Nested Conditions},
  booktitle={Graph Computational Models (GCM)},
  series = {CEUR Workshop Proceedings},
  issn = {1613-0073},
  note="In press",
  year={2026},
}

\end{document}